\newcommand{\eqdef}{\stackrel{\mathrm{def}}{=}}
\newtheorem{theorem}{Theorem}
\newtheorem{lemma}[theorem]{Lemma}
\newtheorem{question}[theorem]{Question}
\newtheorem{proposition}[theorem]{Proposition}
\newtheorem{corollary}[theorem]{Corollary}
\newtheorem{remark}[theorem]{Remark}
\newtheorem{definition}[theorem]{Definition}
\newtheorem*{definition*}{Definition}
\newcommand*\diff{\mathop{}\!\mathrm{d}}
\newcommand{\R}{\mathbb{R}} 
\newcommand{\N}{\mathbb{N}}
\newcommand{\e}{\varepsilon}
\newcommand{\E}{\mathbb{E}}
\newcommand{\Tsf}{\mathsf{T}}
\newcommand{\M}{\mathsf{M}}
\newcommand{\op}[1] {
\big\| #1\big\|_{\mathrm{op}}
}
\newcommand{\p}[1]{
{#1}^{\prime}
}
\newcommand{\Var}[1] {
\mathrm{Var}#1
}
\newcommand{\Ee} {
E_{\varepsilon}
}
\newcommand{\ix}[1] {
\langle \mathcal{I}(#1)x,x\rangle
}
\newcommand{\mb}{\mathbb}
\newcommand{\ms}{\mathscr}
\newcommand{\msf}{\mathsf}
\newcommand{\mr}{\mathrm}
\newcommand{\mc}{\mathcal}
\title{On the entropy and information of Gaussian mixtures}
\author{Alexandros Eskenazis}
\address{(A.~E.) CNRS, Institut de Math\'ematiques de Jussieu, Sorbonne Universit\'e, France and Trinity College, University of Cambridge, UK.}
\email{alexandros.eskenazis@imj-prg.fr, ae466@cam.ac.uk}
\author{Lampros Gavalakis}
\address{(L.~G.) Univ Gustave Eiffel, Univ Paris Est Creteil, CNRS, LAMA UMR8050 F-77447 Marne-la-Vall{\'e}e, France. }
\email{lampros.gavalakis@univ-eiffel.fr} 
\thanks{L.G. has received funding from the European Union's Horizon 2020 research and innovation program
 under the Marie Sklodowska-Curie grant agreement No 101034255~{\large \euflag} and by the B{\'e}zout Labex, funded by ANR, reference ANR-10-LABX-58.}
\begin{document}

\maketitle

\begin{abstract}
We establish several convexity properties for the entropy and Fisher information of mixtures of centered Gaussian distributions. First, we prove that if $X_1, X_2$ are independent scalar Gaussian mixtures, then the entropy of $\sqrt{t}X_1 + \sqrt{1-t}X_2$  is concave in $t \in [0,1]$, thus confirming a conjecture of Ball, Nayar and Tkocz (2016) for this class of random variables. In fact, we prove a generalisation of this assertion which also strengthens a result of Eskenazis, Nayar and Tkocz (2018). For the Fisher information, we extend a convexity result of Bobkov (2022) by showing that the Fisher information matrix is operator convex as a matrix-valued function acting on densities of mixtures in $\mathbb{R}^d$. As an application, we establish rates for the convergence of the Fisher information matrix of the sum of weighted i.i.d.~Gaussian mixtures in the operator norm along the central limit theorem under mild moment assumptions.
\end{abstract}

\bigskip

{\footnotesize
\noindent {\em 2020 Mathematics Subject Classification.} Primary: 94A17; Secondary: 60E15, 26B25.

\noindent {\em Key words.} Entropy, Fisher information, Gaussian mixture, Central Limit Theorem, rates of convergence.}


\section{Introduction}

\subsection{Entropy}
Let $X$ be a continuous random vector in $\mathbb{R}^d$ with density $f:\R^d\to\R_+$. The (differential) entropy of $X$ is the quantity 
\begin{equation} 
h(X) \eqdef -\int_{\mathbb{R}^d} f(x)\log{f(x)}\, \diff x = \mb{E} \big[ - \log f(X)\big],
\end{equation}
where $\log$ always denotes the natural logarithm.
The celebrated entropy power inequality of Shannon and Stam \cite{Sha48,Sta59} (see also \cite{Lie78}) implies that for every independent continuous random vectors $X_1,X_2$ in $\R^d$, we have
\begin{equation} \label{eq:epi}
\forall \ t\in[0,1],\qquad h\big(\sqrt{t}X_1 + \sqrt{1-t}X_2\big) \geq th(X_1) + (1-t)h(X_2).
\end{equation}
In general, the entropy power inequality cannot be reversed (see, e.g., the construction of \cite[Proposition 4]{BC15}). However, reverse entropy power inequalities have been considered under different assumptions on the random vectors, such as log-concavity \cite{bobkovreverse, bobkov2013problem, ball2016reverse, madimanforward}.

It follows directly from \eqref{eq:epi} that if $X_1, X_2$ are i.i.d.~random vectors, then the entropy function $t\mapsto  h(\sqrt{t}X_1 + \sqrt{1-t}X_2)$ is minimised at $t=0$ and $t=1$. In the spirit of reversing the entropy power inequality, Ball, Nayar and Tkocz \cite{ball2016reverse} raised the question of maximising this function. In particular, they gave an example of a random variable $X_1$ for which the maximum is not attained at $t=\frac{1}{2}$ but conjectured that for i.i.d.~log-concave random variables this function must be concave in $t\in[0,1]$, in which case it is in particular maximised at $t=\frac{1}{2}$. It is worth noting that the conjectured concavity would also be a \emph{strengthening} of the entropy power inequality for i.i.d.~random variables, as \eqref{eq:epi} amounts to the concavity condition for the points $0, t, 1$. So far, no special case of the conjecture of \cite{ball2016reverse} seems to be known.

In this work, we consider (centered) Gaussian mixtures, i.e.~random variables of the form
\begin{equation}
X = YZ,
\end{equation}
where $Y$ is an almost surely positive random variable and $Z$ is a standard Gaussian random variable, independent of $Y$. The resulting random variable has density of the form 
\begin{equation} \label{eq:dens-mixt}
\forall \ x\in\R, \qquad f_X(x) = \E\big[{\frac{1}{\sqrt{2\pi Y^2}}e^{-\frac{x^2}{2Y^2}}}\big].
\end{equation}
In particular, as observed in \cite{eskeGM}, \eqref{eq:dens-mixt} combined with Bernstein's theorem readily implies that a symmetric random variable $X$ is a Gaussian mixture if and only if $x\mapsto f_X(\sqrt{x})$ is completely monotonic on $(0,\infty)$. Therefore, distributions with density proportional to $e^{-|x|^p}$, symmetric $p$-stable random variables, where $p\in(0,2]$, and the Cauchy distribution are Gaussian mixtures. Let us mention that Costa \cite{costa} also considered symmetric stable laws to prove a strengthened version of the entropy power inequality that fails in general. 

Our first result proves the concavity of entropy conjectured in \cite{ball2016reverse} for Gaussian mixtures.

\begin{theorem} \label{2GMs}
Let $X_1,X_2$ be independent Gaussian mixtures. Then the function 
\begin{equation}
t \longmapsto h\Bigl(\sqrt{t}X_1 + \sqrt{1-t}X_2\Bigr)
\end{equation}
is concave on the interval $[0,1]$.
\end{theorem}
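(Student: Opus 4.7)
The plan would be to exploit the Gaussian mixture representation of the interpolants. Writing $X_i = Y_i Z_i$ with $Z_i \sim \mathcal{N}(0,1)$ independent of $Y_i > 0$, conditioning on $(Y_1, Y_2)$ realises $V_t := \sqrt{t}\,X_1 + \sqrt{1-t}\,X_2$ as a Gaussian mixture: its conditional distribution is $\mathcal{N}(0, W_t)$ with $W_t := tY_1^2 + (1-t)Y_2^2$, so that $V_t \stackrel{d}{=} \sqrt{W_t}\,G$ for an independent $G \sim \mathcal{N}(0,1)$. The essential structural feature that would drive the argument is that $W_t$ is \emph{pointwise affine} in $t$.

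First I would isolate the obvious concave piece of the entropy via the identity
\[
h(V_t) \;=\; \tfrac12 \log\bigl(2\pi e\,\mathbb{E} W_t\bigr) \;-\; D\bigl(V_t \,\big\|\, \mathcal{N}(0, \mathbb{E} W_t)\bigr).
\]
Since $\mathbb{E} W_t = t\,\mathbb{E} Y_1^2 + (1-t)\,\mathbb{E} Y_2^2$ is affine in $t$, the first term is concave as the composition of $\log$ with an affine function. Proving the theorem therefore reduces to showing that $t \mapsto D\bigl(V_t \,\|\, \mathcal{N}(0, \mathbb{E} W_t)\bigr)$ is convex on $[0,1]$.

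To handle this relative entropy I would apply the integrated Fisher-information formula
\[
D\bigl(V \,\big\|\, \mathcal{N}(0, \sigma^2)\bigr) \;=\; \tfrac12 \int_0^\infty \Bigl[ J\bigl(V+\sqrt{s}\,G'\bigr) - \tfrac{1}{\sigma^2+s}\Bigr]\,ds,
\]
valid whenever $\sigma^2 = \mathrm{Var}(V)$ and $G' \sim \mathcal{N}(0,1)$ is independent; it is obtained by integrating de Bruijn's identity against the Gaussian comparison. Since $V_t + \sqrt{s}\,G' \stackrel{d}{=} \sqrt{W_t+s}\,G$ is again a Gaussian mixture whose mixing variance $W_t+s$ remains affine in $t$, the task becomes to prove that, for each fixed $s \geq 0$, the integrand $J\bigl(V_t+\sqrt{s}\,G'\bigr) - \tfrac{1}{\mathbb{E} W_t+s}$ is convex in $t$. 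To unpack the Fisher information I would use the Brown-type identity $(\log f_V)'(v) = -v\,\mathbb{E}[1/U \mid V = v]$ for a scale mixture $V = \sqrt{U}\,G$, which rewrites $J(\sqrt{U}\,G) = \mathbb{E}\bigl[V^2\,\mathbb{E}[1/U \mid V]^2\bigr]$ as an expectation of squared conditional expectations of $1/U$.

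The main obstacle will be precisely that convexity statement. It does not follow from the general convexity of $J$ as a functional on densities, because the law of $W_t+s$ is not an affine combination of its endpoint laws even though $W_t+s$ itself is pointwise affine. It is also not enough to know that $J\bigl(V_t+\sqrt{s}\,G'\bigr)$ and $1/(\mathbb{E} W_t + s)$ are individually convex in $t$, since a difference of two convex functions need not be convex. I expect one must genuinely use the Gaussian-mixture structure by combining the pointwise convexity of $u \mapsto 1/u$ along the affine family $W_t+s$, a Jensen-type estimate for the posterior of $W_t+s$ given $V_t + \sqrt{s}\,G'$, and the operator convexity of the Fisher information for Gaussian mixtures that is announced as the other main result of the paper (the matrix-valued strengthening of Bobkov's theorem). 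This is the technical heart of the argument and the place where the Gaussian-mixture hypothesis is really used, in contrast to the general concavity-of-entropy theory for log-concave measures.
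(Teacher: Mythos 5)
Your individual reductions are sound (the decomposition of $h(V_t)$ into the concave Gaussian term minus a relative entropy, the integrated de Bruijn formula, and the Brown-type identity for the score of a scale mixture), but the argument stops exactly where the theorem begins: the convexity in $t$ of $J(V_t+\sqrt{s}\,G') - 1/(\mathbb{E}W_t+s)$ is posed as the target and never established, and the tools you name cannot establish it. The operator convexity of $\mathcal{I}$ (Propositions \ref{prop:conv} and \ref{opconvex} of the paper) is convexity with respect to \emph{affine combinations of densities}, whereas, as you yourself observe, the law of $V_t$ is not an affine combination of the laws of $V_{t_1}$ and $V_{t_2}$; that proposition therefore yields the upper bound $J(\sqrt{W_t+s}\,G)\le \mathbb{E}\bigl[1/(W_t+s)\bigr]$ but says nothing about the dependence on $t$. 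Note also that your reduction asks for something strictly stronger than the theorem: convexity of $t\mapsto D\bigl(V_t\,\|\,\mathcal{N}(0,\mathbb{E}W_t)\bigr)$ together with concavity of $\tfrac12\log(2\pi e\,\mathbb{E}W_t)$ implies concavity of $h(V_t)$, but not conversely, and you give no evidence that this stronger statement is true. As it stands the proposal is a reformulation of the problem rather than a proof.

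The paper's argument is entirely different and far more elementary; it never touches Fisher information. Fixing $t=\lambda t_1+(1-\lambda)t_2$, it uses the variational formula $h(X)=\min_g \mathbb{E}[-\log g(X)]$ to bound $\lambda h(V_{t_1})+(1-\lambda)h(V_{t_2})$ by $\lambda\mathbb{E}[-\log g_t(V_{t_1})]+(1-\lambda)\mathbb{E}[-\log g_t(V_{t_2})]$, where $g_t$ is the density of $V_t$, and then exploits the one structural fact you also isolated, namely that the mixing variance is pointwise affine in $t$: since $V_t$ is itself a Gaussian mixture, $g_t(\sqrt{\cdot})$ is completely monotonic, hence by Bernstein's theorem the Laplace transform of a nonnegative measure, hence log-convex, so $\varphi_t=-\log g_t(\sqrt{\cdot})$ is concave on $(0,\infty)$; applying this concavity to the affine-in-$t$ random argument $\bigl(tY_1^2+(1-t)Y_2^2\bigr)Z^2$ closes the argument immediately. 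If you wish to salvage your route, you would need to prove from scratch the pointwise-in-$s$ convexity of the adjusted Fisher information along the affine family $W_t+s$; the complete-monotonicity and log-convexity observation is the ingredient that plays that role in the paper.
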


Theorem \ref{2GMs} will be a straightforward consequence of a more general result for the R\'enyi entropy of a weighted sum of $n$ Gaussian mixtures. Let $\triangle^{n-1}$ be the standard simplex in $\R^n$,
\begin{equation}
\triangle^{n-1} \eqdef \big\{(\pi_1,\ldots,\pi_n)\in[0,1]^n: \ \pi_1+\cdots+\pi_n=1\big\}.
\end{equation}
The R{\'e}nyi entropy of order $\alpha\neq 1$ of a random vector $X$ with density $f$ is given by
\begin{equation}
h_{\alpha}(X) \eqdef \frac{1}{1-\alpha}\log{\Bigl(\int_{\mathbb{R}^d}{f^{\alpha}(x)\,\diff x}\Bigr)},
\end{equation}
and $h_1(X)$ is simply the Shannon entropy $h(X)$. We will prove the following general concavity.

\begin{theorem} \label{nGMs}
Let $X_1,\ldots,X_n$ be independent Gaussian mixtures. Then, the function
\begin{equation} \label{eq:fun}
\triangle^{n-1} \ni (a_1^2,\ldots,a_n^2) \longmapsto h_{\alpha}\Big( \sum_{i=1}^n a_i X_i\Big)
\end{equation}
is concave on $\triangle^{n-1}$ for every $\alpha \geq 1$.
\end{theorem}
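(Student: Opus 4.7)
The plan is to reduce the concavity of $h_\alpha$ to a log-convexity statement on $\triangle^{n-1}$ for the quantity
\[
F_\alpha(s) \eqdef \int_\R f(x;s)^\alpha\, \diff x,
\]
where $f(\cdot;s)$ denotes the density of $\sum_{i=1}^n a_i X_i$ and $s_i=a_i^2$. For $\alpha>1$, since $h_\alpha = \frac{1}{1-\alpha}\log F_\alpha$ and $\frac{1}{1-\alpha}<0$, concavity of $h_\alpha$ is equivalent to $F_\alpha$ being log-convex in $s$. The Shannon case $\alpha=1$ (which recovers Theorem~\ref{2GMs}) should then follow by letting $\alpha\to 1^+$: since $\log F_\alpha(s) = -(\alpha-1)\,h\bigl(\sum_i a_i X_i\bigr) + O((\alpha-1)^2)$, the function $-(\alpha-1)^{-1}\log F_\alpha$ is convex in $s$ for every $\alpha>1$, and its pointwise limit $-h(\sum_i a_i X_i)$ inherits convexity.

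The key structural fact is that $\sum_i a_i X_i$ is itself a Gaussian mixture: conditioning on $(Y_1,\ldots,Y_n)$ gives a centered Gaussian with variance $S\eqdef \sum_i s_i Y_i^2$. Consequently its characteristic function factorises as
\[
\widehat{f}(t;s) = \E\bigl[e^{-t^2 S/2}\bigr] = \prod_{i=1}^n \psi_i(t^2 s_i/2), \qquad \psi_i(u)\eqdef \E\bigl[e^{-uY_i^2}\bigr],
\]
with each $\psi_i$ a Laplace transform of a positive measure, hence log-convex on $(0,\infty)$ by Cauchy--Schwarz. Composing $\psi_i$ with the linear map $s_i\mapsto t^2 s_i/2$ preserves log-convexity, and since the factors depend on distinct coordinates, $s\mapsto \widehat{f}(t;s)$ is log-convex on the positive orthant for every fixed $t$.

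For integer $\alpha\geq 2$, iterated Plancherel (valid since $\widehat{f}$ is real and non-negative by symmetry of $f$) yields
\[
F_\alpha(s) = \frac{1}{(2\pi)^{\alpha-1}} \int_{\R^{\alpha-1}} \prod_{i=1}^{\alpha-1}\widehat{f}(t_i;s)\cdot \widehat{f}\Big(\sum_{i=1}^{\alpha-1} t_i;\,s\Big)\, \diff t_1\cdots \diff t_{\alpha-1}.
\]
The integrand is a product of log-convex functions in $s$, hence log-convex; integration against the positive measure $\diff t_1\cdots \diff t_{\alpha-1}$ preserves log-convexity (Hölder). This settles the integer case.

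The main obstacle will be extending this to arbitrary real $\alpha \geq 1$, since the Plancherel identity no longer provides such a clean representation and log-convexity in $s$ cannot simply be analytically continued in $\alpha$. The natural strategy is to combine the factorisation of $\widehat{f}$ with a Bochner-type integral representation for $c\mapsto c^\alpha$---schematically $c^\alpha = C_\alpha \int_0^\infty \bigl(\text{entire function of } uc\bigr)\, u^{-\alpha-1}\,\diff u$---apply it pointwise to $f(x;s)^\alpha$, and invoke Fubini to express $F_\alpha$ as a positive integral of products of Laplace transforms $\widehat{f}(\cdot;s)$, each log-convex in $s$. Controlling the signs and convergence of this representation, and ensuring the resulting integrand is genuinely log-convex in $s$, is the technical crux I expect to have to address.
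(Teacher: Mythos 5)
Your Fourier-side argument is correct as far as it goes, and it is genuinely different from the paper's proof: the observation that $\widehat{f}(t;s)=\prod_i \psi_i(t^2 s_i/2)$ is log-convex in $s=(a_1^2,\dots,a_n^2)$ because each factor is a Laplace transform composed with a linear map of a single coordinate, combined with iterated Plancherel and the stability of log-convexity under integration against a positive measure, does prove log-convexity of $F_\alpha$ --- but only for \emph{integer} $\alpha\geq 2$. The theorem, however, concerns all real $\alpha\geq 1$, and its main case is $\alpha=1$ (Theorem \ref{2GMs}); your route to both of these passes through log-convexity of $F_\alpha$ for non-integer $\alpha$ near $1$, which is exactly the step you leave open. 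So as written the proposal proves neither the Shannon case nor any non-integer R\'enyi case; this is a genuine gap, not a routine technicality.

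Moreover, the fix you sketch is unlikely to work in the proposed form. Subordination formulas for $c\mapsto c^\alpha$ with non-integer $\alpha$ (e.g.\ $c^\alpha=C_\alpha\int_0^\infty\bigl(e^{-uc}-1+uc\bigr)u^{-\alpha-1}\diff u$ for $1<\alpha<2$) necessarily involve \emph{signed} combinations, and log-convexity is not preserved under subtraction; you flag the sign problem yourself, and I do not see how to control it. The paper closes the gap by working on the spatial side instead: since $\sum_i\sqrt{s_i}X_i$ has the same distribution as $\bigl(\sum_i s_iY_i^2\bigr)^{1/2}Z$, it is itself a Gaussian mixture, so its density $g_s$ has $g_s(\sqrt{\cdot})$ completely monotone, hence $-\log g_s(\sqrt{\cdot})$ is concave and $g_s^{\alpha-1}(\sqrt{\cdot})$ is log-convex; freezing the reference density at the intermediate weight vector --- via the variational formula for entropy when $\alpha=1$, and via H\"older when $\alpha>1$ --- and using that the argument $\sum_i s_iY_i^2Z^2$ is \emph{linear} in $s$ then gives concavity for all $\alpha\geq1$ at once. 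To salvage your approach you would need a representation of $\int f^\alpha$ as a positive mixture of multilinear expressions in $\widehat{f}$, which is not available for non-integer $\alpha$.
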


 When $n=2$ and $\alpha =1 $, Theorem \ref{nGMs} reduces exactly to Theorem \ref{2GMs}.

In \cite[Theorem~8]{eskeGM}, it was shown that if $X_1,\ldots,X_n$ are i.i.d., then the function \eqref{eq:fun} is \emph{Schur concave}, namely that if $(a_1,\ldots,a_n)$ and $(b_1,\ldots,b_n)$ are two unit vectors in $\mathbb{R}^n$, then 
\begin{equation} \label{eskegmth}
(a_1^2,\ldots,a_n^2) \preceq_\mr{m} (b_1^2,\ldots,b_n^2) \ \ \ \Longrightarrow \ \ \  h_{\alpha}\Bigl(\sum_{i=1}^n{a_iX_i}\Bigr) \geq h_{\alpha}\Bigl(\sum_{i=1}^n{b_iX_i}\Bigr),
\end{equation}
for any $\alpha \geq 1$, where $\preceq_\mr{m}$ is the majorisation ordering of vectors (see \cite{eskeGM}). As the unit vector with all coordinates equal to $\frac{1}{{n}}$ is majorised by any other vector in $\triangle^{n-1}$, \eqref{eskegmth} implies that the function \eqref{eq:fun} achieves its maximum on the main diagonal for Gaussian mixtures. 

As any permutationally invariant concave function is Schur concave (see \cite[p.~97]{MO}), \eqref{eskegmth} follows from Theorem \ref{nGMs}. On the other hand, the function $x_1\cdots x_n$ is  permutationally invariant and Schur concave on $\R_+^n$ (see \cite[p.~115]{MO}) but it is evidently not concave on the hyperplane $x_1+\cdots+x_n=1$ when $n\geq3$. Therefore, Theorem \ref{nGMs} is a strict refinement of \cite[Theorem~8]{eskeGM}.

We note in passing that, while the conclusion of Theorem \ref{2GMs} has been conjectured in \cite{ball2016reverse} to hold for every i.i.d.~log-concave random variables $X_1, X_2$, the conclusion of Theorem \ref{nGMs} cannot hold for this class of variables.  In \cite{MNT20}, Madiman, Nayar and Tkocz constructed a symmetric log-concave random variable $X$ for which the Schur concavity \eqref{eskegmth} does not hold for i.i.d.~copies of $X$ and thus, as a consequence of \cite[p.~97]{MO}, the concavity of Theorem \ref{nGMs} must also fail.




\subsection{Fisher Information}

Let $X$ be a continuous random vector in $\mathbb{R}^d$ with smooth density $f:\R^d\to\R_+$. The Fisher information of $X$ is the quantity 
\begin{equation}
I(X) \eqdef \int_{\mathbb{R}^d}{\frac{|\nabla f(x)|^2}{f(x)}\,\diff x} = \mb{E} \big[\big|\rho(X)\big|^2\big],
\end{equation}
where $\rho(x) \eqdef \frac{{\nabla f(x)}}{f(x)}$ is the {\em score function} of $X$. Fisher information and entropy are connected by the classical de Bruijn identity (see, e.g., \cite{Sta59}), due to which most results for Fisher information are formally stronger than their entropic counterparts. In particular, the inequality
\begin{equation} \label{eq:bs}
\forall \ t\in[0,1],\qquad \frac{1}{I(\sqrt{t}X_1 + \sqrt{1-t}X_2)} \geq \frac{t}{I(X_1)} + \frac{1-t}{I(X_2)}
\end{equation}
of Blachman and Stam \cite{Sta59,Bla65}, which holds for all independent random vectors $X_1,X_2$ in $\R^d$, implies the entropy power inequality  \eqref{eq:epi}. In the spirit of the question of Ball, Nayar and Tkocz \cite{ball2016reverse} and of the result of \cite{eskeGM}, we raise the following problem.

\begin{question} \label{q:fi}
Let $X_1,\ldots,X_n$ be i.i.d.~Gaussian mixtures. For which unit vectors $(a_1,\ldots,a_n)$ in $\R^n$ is the Fisher information of $\sum_{i=1}^n a_i X_i$ minimised?
\end{question}

While Question \ref{q:fi} still remains elusive, we shall now explain how to obtain some useful bounds for the Fisher information of mixtures. In order to state our results in the greatest possible generality, we consider random \emph{vectors} which are mixtures of centered multivariate Gaussians. Recall that the Fisher information matrix of a random vector $X$ in $\R^d$ is given by
\begin{equation}
\mathcal{I}(X)_{ij} \eqdef \int_{\mathbb{R}^d}{\frac{\partial_i f(x) \partial_j f(x)}{f(x)} \,\diff x},
\end{equation}
where $f:\R^d\to\R_+$ is the smooth density of $X$, so that $I(X) = \mr{tr}\mc{I}(X)$.

Let $\mathcal{F}_d \subset L_1(\mathbb{R}^d)$ be the space of smooth probability densities on $\mathbb{R}^d$. By abuse of notation, we will also write $I(f)$ and $\mathcal{I}(f)$ to denote the Fisher information and Fisher information matrix respectively of a random vector with smooth density $f$ on $\mathbb{R}^d$. In his recent treatise on estimates for the Fisher information, Bobkov made crucial use of the \emph{convexity} of the Fisher information functional $I(X)$ as a function of the density of the random variable $X$, see \cite[Proposition~15.2]{bobkov2022upper}. For our purposes we shall need the following matricial extension of this. 

\begin{proposition} \label{prop:conv}
Fix $d\in\N$. If $\pi$ is a Borel probability measure on $\mc{F}_d$, then
\begin{equation}  \label{FIJI}
\mc{I}\Big( \int_{\mc{F}_d} g \,\diff\pi(g)\Big) \preceq \int_{\mathcal{F}_d}{\mc{I}{(g)}\,\diff\pi(g)},
\end{equation}
provided that $\int_{\mathcal{F}_d}{\|\mc{I}{(g)\|_{\mr{op}}}\,\diff\pi(g)}<\infty$. Here $\preceq$ denotes the positive semi-definite ordering of matrices.
\end{proposition}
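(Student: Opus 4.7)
The plan is to reduce Proposition \ref{prop:conv} to a scalar convexity statement via a variational characterization of the Fisher information quadratic form. This avoids any delicate commutation between the gradient and the (possibly abstract) integral $\int_{\mc{F}_d} g\, \diff\pi(g)$, because all derivatives will be transferred onto compactly supported test functions via integration by parts.

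First, I would observe that the positive semi-definite bound \eqref{FIJI} is equivalent to the scalar estimate $\langle \mc{I}(f)x,x\rangle \leq \int_{\mc{F}_d} \langle \mc{I}(g)x,x\rangle \, \diff\pi(g)$ holding for every fixed $x \in \R^d$, where $f \eqdef \int_{\mc{F}_d} g\, \diff\pi(g)$ is itself a probability density on $\R^d$. Write $D_x \eqdef \sum_i x_i\partial_i$ for the corresponding directional derivative, so that $\langle \mc{I}(g)x,x\rangle = \int_{\R^d} (D_x g)^2/g\, \diff y$.

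The key ingredient is the variational identity
\begin{equation*}
\langle \mc{I}(g)x,x\rangle \;=\; \sup_{\psi \in C_c^\infty(\R^d)} \Phi_\psi(g), \qquad \Phi_\psi(g) \eqdef -2\int_{\R^d} D_x\psi(y)\, g(y)\, \diff y \;-\; \int_{\R^d} \psi(y)^2\, g(y)\, \diff y,
\end{equation*}
valid for every smooth density $g \in \mc{F}_d$. This is obtained by integration by parts, which turns $\Phi_\psi(g)$ into $2\int \psi\, D_x g - \int \psi^2 g$, followed by completing the square: one has $\Phi_\psi(g) = \langle \mc{I}(g)x,x\rangle - \int_{\R^d} (\psi - D_x g/g)^2\, g\,\diff y$, so that $\Phi_\psi(g) \leq \langle \mc{I}(g)x,x\rangle$ for every $\psi$, with equality attained by approximating $D_x g/g$ in $L^2(g\,\diff y)$ by $C_c^\infty$ functions.

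The gain from this reformulation is that for each fixed $\psi \in C_c^\infty(\R^d)$, the map $g \mapsto \Phi_\psi(g)$ is linear and uniformly bounded on probability densities by $2\|D_x\psi\|_\infty + \|\psi\|_\infty^2$, since $\int g = 1$ and $\psi$ is compactly supported. Fubini's theorem is thus immediate and gives $\Phi_\psi(f) = \int_{\mc{F}_d} \Phi_\psi(g)\, \diff\pi(g) \leq \int_{\mc{F}_d} \langle \mc{I}(g)x,x\rangle\, \diff\pi(g)$, where the last quantity is finite by the hypothesis $\int \|\mc{I}(g)\|_{\mr{op}}\,\diff\pi(g) < \infty$. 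Taking the supremum over $\psi \in C_c^\infty$ on the left and invoking the variational formula for the density $f$ yields the desired bound $\langle \mc{I}(f)x,x\rangle \leq \int \langle \mc{I}(g)x,x\rangle\,\diff\pi(g)$, which is the scalar form of \eqref{FIJI}.

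The main technical obstacle is justifying that the supremum over $\psi \in C_c^\infty$ genuinely recovers $\langle \mc{I}(f)x,x\rangle$ for the \emph{averaged} density $f$; this requires approximating $D_x f / f$ in $L^2(f\,\diff y)$ by smooth compactly supported test functions, which is standard given that $f$ is itself a probability density in $\mc{F}_d$. In fact, even without assuming a priori that $\mc{I}(f)$ is finite, the argument above shows $\sup_\psi \Phi_\psi(f) \leq \int \|\mc{I}(g)\|_{\mr{op}}|x|^2\,\diff\pi(g)<\infty$, which then forces $\langle\mc{I}(f)x,x\rangle$ itself to be finite and equal to the supremum by the standard density argument in $L^2(f\,\diff y)$.
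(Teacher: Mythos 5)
Your argument is correct, but it takes a genuinely different route from the paper's. The paper proves two-point operator convexity of $\mathcal{I}$ by exhibiting the integrand as the jointly operator convex function $(v,\lambda)\mapsto vv^T/\lambda$, extends to finite mixtures by induction, and then passes to general mixtures via a compactness scheme borrowed from Bobkov: truncating $\pi$ to the sets $\{g:\|\mathcal{I}(g)\|_{\mathrm{op}}\leq I\}$, running a convexity/closedness argument there, and letting $I\to\infty$ using lower semicontinuity of $g\mapsto\langle\mathcal{I}(g)x,x\rangle$ under weak convergence with uniformly bounded Fisher information (the paper's Lemma on semicontinuity). You instead write the quadratic form as a supremum, over $\psi\in C_c^{\infty}(\mathbb{R}^d)$, of functionals $\Phi_\psi$ that are \emph{linear} in the density and uniformly bounded on probability densities, so that Jensen's inequality for an arbitrary mixture collapses to Fubini plus a supremum; no finite-mixture induction and no compactness argument are needed. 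This classical duality is arguably cleaner and in fact subsumes the paper's semicontinuity lemma, since a supremum of functionals continuous under weak convergence is automatically convex and lower semicontinuous. The one step that genuinely needs the care you give it is the reverse identification $\sup_\psi\Phi_\psi(f)=\langle\mathcal{I}(f)x,x\rangle$ for the averaged density $f$: finiteness of the supremum yields, via Cauchy--Schwarz and Riesz representation in $L^2(f\,\mathrm{d}y)$, a square-integrable directional score for $f$, which coincides with $D_xf/f$ under the same implicit regularity of $\int_{\mathcal{F}_d}g\,\mathrm{d}\pi(g)$ that the paper assumes when it applies $\mathcal{I}$ to this mixture; modulo that shared caveat, your proof is complete.
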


We propose the following definition of Gaussian mixtures in arbitrary dimension.

\begin{definition} \label{def}
A random vector $X$ in $\R^d$ is a (centered) Gaussian mixture if $X$ has the same distribution as ${\bf Y}Z$, where ${\bf Y}$ is a random $d\times d$ matrix which is almost surely positive definite (and thus symmetric) and $Z$ is a standard Gaussian random vector in $\R^d$, independent of ${\bf Y}$.\footnote{In view of the polar decomposition of ${\bf Y}$ as ${\bf Y} = ({\bf Y}^T{\bf Y})^{1/2}U$, where $U$ is an orthogonal matrix, and the rotational invariance of the Gaussian measure, the definition above could be rephrased with ${\bf Y}$ assumed to be almost surely nonsingular.  The nonsingularity assumption is imposed to ensure that $X$ has a density.}
\end{definition}

As in the scalar case, a Gaussian mixture $X$ in $\R^d$ has density of the form
\begin{equation} \label{eq:dens-GM}
\forall \ x\in\R^d, \qquad f_X(x) = \mb{E}\Big[ \frac{1}{\det(\sqrt{2\pi} {\bf Y})} e^{-|{\bf Y^{-1}}x|^2/2} \Big].
\end{equation}
Employing Proposition \ref{prop:conv} for Gaussian mixtures we deduce the following bound.

\begin{corollary} \label{cor:fisher-GM}
Fix $d\in\N$ and let $X$ be a random vector in $\R^d$ admitting a Gaussian mixture representation ${\bf Y}Z$. Then, we have
\begin{equation}
\mc{I}(X) \preceq \mb{E} \big[ ({\bf Y Y}^T)^{-1}\big].
\end{equation}
\end{corollary}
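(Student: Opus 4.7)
The plan is to write the density $f_X$ explicitly as a mixture of Gaussian densities indexed by the law of $\mathbf{Y}$, apply Proposition~\ref{prop:conv} to this mixture, and use the fact that the Fisher information matrix of a centered multivariate Gaussian is the inverse of its covariance.

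More concretely, let $\pi$ denote the pushforward onto $\mc{F}_d$ of the law of $\mathbf{Y}$ under the map $\mathbf{y} \mapsto g_{\mathbf{y}}$, where
\begin{equation}
g_{\mathbf{y}}(x) \eqdef \frac{1}{\det(\sqrt{2\pi}\mathbf{y})}\exp\Bigl(-\tfrac{1}{2}|\mathbf{y}^{-1}x|^2\Bigr), \qquad x\in\R^d,
\end{equation}
is the density of the centered Gaussian with covariance $\mathbf{y}\mathbf{y}^T = \mathbf{y}^2$ (using that $\mathbf{y}$ is symmetric positive definite). By \eqref{eq:dens-GM}, the density of $X$ is exactly $f_X = \int_{\mc{F}_d} g \,\diff\pi(g)$. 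A direct computation with $\nabla \log g_{\mathbf{y}}(x) = -\mathbf{y}^{-2}x$ gives
\begin{equation}
\mc{I}(g_{\mathbf{y}}) = \mathbf{y}^{-2}\,\mb{E}_{G\sim g_{\mathbf{y}}}\bigl[GG^T\bigr]\,\mathbf{y}^{-2} = \mathbf{y}^{-2} = (\mathbf{y}\mathbf{y}^T)^{-1},
\end{equation}
so that, assuming $\mb{E}\|(\mathbf{Y}\mathbf{Y}^T)^{-1}\|_{\mr{op}}<\infty$, Proposition \ref{prop:conv} yields
\begin{equation}
\mc{I}(X) = \mc{I}\Bigl( \int_{\mc{F}_d} g\,\diff\pi(g)\Bigr) \preceq \int_{\mc{F}_d} \mc{I}(g)\,\diff\pi(g) = \mb{E}\bigl[(\mathbf{Y}\mathbf{Y}^T)^{-1}\bigr],
\end{equation}
which is the desired bound. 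If instead $\mb{E}\|(\mathbf{Y}\mathbf{Y}^T)^{-1}\|_{\mr{op}} = \infty$, then the right-hand side $\mb{E}[(\mathbf{Y}\mathbf{Y}^T)^{-1}]$ dominates (in the positive semidefinite ordering) any finite matrix, so the claimed inequality holds trivially.

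There is no real obstacle: the only verifications needed are the identification of $f_X$ as a genuine mixture over $\mc{F}_d$ and the elementary computation of the Fisher information matrix of a multivariate centered Gaussian. Both steps are routine, and all of the work is done by Proposition \ref{prop:conv}.
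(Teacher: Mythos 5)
Your proposal is correct and follows exactly the paper's own argument: identify $f_X$ via \eqref{eq:dens-GM} as a mixture of centered Gaussian densities, apply Proposition \ref{prop:conv}, and use that the Fisher information matrix of a centered Gaussian with covariance $\Sigma$ is $\Sigma^{-1}$. The only addition is your explicit treatment of the case $\mb{E}\|(\mathbf{Y}\mathbf{Y}^T)^{-1}\|_{\mr{op}}=\infty$, which the paper leaves implicit.
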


This upper bound should be contrasted with the general lower bound
\begin{equation} \label{eq:cr}
\mc{I}(X) \succeq \mr{Cov}(X)^{-1} = \big( \mb{E} {\bf  Y Y}^T\big)^{-1},
\end{equation}
where the first inequality is the multivariate Cr\'amer--Rao bound \cite[Theorem~3.4.4]{BD15}.



\subsubsection{Quantitative CLT for the Fisher information matrix of Gaussian mixtures}

Equality in the Cram\'er--Rao bound \eqref{eq:cr} is attained if and only if $X$ is Gaussian.  The deficit in the scalar version of this inequality is the relative Fisher information $I(X\| Z)$ between $X$ and $Z$ and may be interpreted as a strong measure of distance of $X$ from Gaussianity. In particular, in view of Gross' logarithmic Sobolev inequality \cite{gross} and Pinsker's inequality \cite{Pin64,csiszarpinskerref, kullback}, closeness in relative Fisher information implies closeness in relative entropy and a fortiori in total variation distance. 
Therefore, a very natural question is under which conditions and with what rate the relative Fisher information of a weighted sum tends to zero along the central limit theorem, thus offering a strenthening of the entropic central limit theorem \cite{barronentropy}. As an application of Corollary \ref{cor:fisher-GM}, we obtain a bound for a matrix analogue of the relative Fisher information of Gaussian mixtures. Here and throughout, $\|\cdot\|_{\mr{op}}$ denotes the operator norm of a square matrix.

\begin{theorem} \label{coropconvexity}
Fix $d\in\N$, $\delta\in(0,1]$ and let $X_1,\ldots,X_n$ be i.i.d.~random vectors in $\mathbb{R}^d$, each admitting a Gaussian mixture representation ${\bf Y}Z$ as above. Assume also that
\begin{equation} \label{matrixconditions}
\E\op{{\bf Y Y}^T}^{1+\delta}<\infty \quad \mbox{and} \quad \E\op{\big({\bf Y Y}^T\big)^{-1}}^{1+\delta} < \infty.
\end{equation}
Then, for every unit vector $a = (a_1,\ldots,a_n)$ in $\mathbb{R}^n$ the weighted sum $S_n = \sum_{i=1}^n{a_iX_i}$ satisfies 
\begin{equation} \label{eq:corop}
\op{\mathrm{Cov}(S_n)^{\frac{1}{2}}\mathcal{I}(S_n)\mathrm{Cov}(S_n)^{\frac{1}{2}}-\mathrm{I}_d} \leq C({\bf Y}) \log^\delta(d+1) \|a\|_{2+2\delta}^{\frac{2\delta}{1+\delta}},
\end{equation}
where $C({\bf Y})$ is a constant that depends only on the moments of $\|{\bf YY}^T\|_{\mr{op}}$.
\end{theorem}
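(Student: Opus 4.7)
The plan is to translate the bound on the Fisher information matrix into a random matrix inverse problem via Corollary~\ref{cor:fisher-GM} and the matricial Cram\'er--Rao inequality \eqref{eq:cr}, and then to bound the resulting discrepancy by combining a matrix Rosenthal-type inequality with a truncation argument.

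\emph{Reduction.} Conditionally on $\mathbf{Y}_1,\ldots,\mathbf{Y}_n$, the weighted sum $S_n = \sum_{i=1}^n a_i \mathbf{Y}_i Z_i$ is centered Gaussian with covariance $\sum_{i=1}^n a_i^2 \mathbf{Y}_i\mathbf{Y}_i^T$, so $S_n$ is itself a Gaussian mixture in the sense of Definition~\ref{def} with random scaling $(\sum_i a_i^2 \mathbf{Y}_i\mathbf{Y}_i^T)^{1/2}$. Applying Corollary~\ref{cor:fisher-GM} gives $\mathcal{I}(S_n) \preceq \mathbb{E}[(\sum_i a_i^2 \mathbf{Y}_i\mathbf{Y}_i^T)^{-1}]$, whereas the Cram\'er--Rao bound yields $\mathcal{I}(S_n) \succeq \Sigma^{-1}$, where $\Sigma := \mathbb{E}[\mathbf{Y}\mathbf{Y}^T] = \mathrm{Cov}(S_n)$. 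Introducing the standardised matrices $A_i := \Sigma^{-1/2}\mathbf{Y}_i\mathbf{Y}_i^T\Sigma^{-1/2}$ (i.i.d.\ with $\mathbb{E}[A_i]=I_d$) and $B := \sum_i a_i^2 A_i$, conjugation by $\Sigma^{1/2}$ produces the positive semi-definite sandwich $I_d \preceq \Sigma^{1/2}\mathcal{I}(S_n)\Sigma^{1/2} \preceq \mathbb{E}[B^{-1}]$, which reduces \eqref{eq:corop} to the matrix estimate
\begin{equation*}
\|\mathbb{E}[B^{-1}] - I_d\|_{\mathrm{op}} \leq C(\mathbf{Y})\log^\delta(d+1)\|a\|_{2+2\delta}^{2\delta/(1+\delta)}.
\end{equation*}

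\emph{Identity and main terms.} A direct computation based on $(B-I_d)B^{-1}(B-I_d) = B + B^{-1} - 2I_d$ and $\mathbb{E}[B] = I_d$ yields the key PSD identity
\begin{equation*}
\mathbb{E}[B^{-1}] - I_d = \mathbb{E}\bigl[(B-I_d)B^{-1}(B-I_d)\bigr],
\end{equation*}
so the goal becomes estimating $\mathbb{E}[\|B^{-1}\|_{\mathrm{op}}\|B-I_d\|_{\mathrm{op}}^2]$. For the $\|B^{-1}\|_{\mathrm{op}}$ factor, operator convexity of $X\mapsto X^{-1}$ yields $B^{-1}\preceq\sum_i a_i^2 A_i^{-1}$, hence by convexity of $x\mapsto x^{1+\delta}$ and $\sum_i a_i^2 = 1$ one gets $\mathbb{E}\|B^{-1}\|_{\mathrm{op}}^{1+\delta}\leq \mathbb{E}\|A_1^{-1}\|_{\mathrm{op}}^{1+\delta}<\infty$ by \eqref{matrixconditions}. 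For the concentration of $B$ around its mean, a matrix Rosenthal (or von Bahr--Esseen) inequality applied to the independent mean-zero sum $B - I_d = \sum_i a_i^2(A_i - I_d)$ gives
\begin{equation*}
\mathbb{E}\|B-I_d\|_{\mathrm{op}}^{1+\delta} \leq C\log(d+1)\|a\|_{2+2\delta}^{2+2\delta}\,\mathbb{E}\|A_1 - I_d\|_{\mathrm{op}}^{1+\delta}.
\end{equation*}

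\emph{Truncation, optimisation and the main obstacle.} Split the expectation at the concentration event $\{\|B-I_d\|_{\mathrm{op}}\leq 1/2\}$, on which automatically $\|B^{-1}\|_{\mathrm{op}}\leq 2$; the on-event contribution is bounded by $2^\delta\mathbb{E}\|B-I_d\|_{\mathrm{op}}^{1+\delta}$, which is already dominated by the target rate. For the complementary tail event one truncates each $A_i$ at a level $K$ to be optimised, controlling the bounded part via matrix Bernstein and the deleted heavy-tailed contribution via its $L^{1+\delta}$-norm, and pairs the resulting tail estimate for $\|B-I_d\|_{\mathrm{op}}$ with the $L^{1+\delta}$-moment of $\|B^{-1}\|_{\mathrm{op}}$ through H\"older; tuning $K$ produces the exponents $2\delta/(1+\delta)$ on $\|a\|_{2+2\delta}$ and $\log^\delta(d+1)$ on the dimension. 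The principal difficulty lies precisely here: since only $L^{1+\delta}$ control on $A_i$ is available with $\delta\in(0,1]$, a direct H\"older estimate on $\mathbb{E}[\|B^{-1}\|_{\mathrm{op}}\|B-I_d\|_{\mathrm{op}}^2]$ would demand the $L^{2(1+\delta)/\delta}$-moment of $\|B-I_d\|_{\mathrm{op}}$, which is not finite in general. The truncation--interpolation scheme above is the mechanism that bridges this gap and is responsible for the unusual exponent $2\delta/(1+\delta)$ appearing in \eqref{eq:corop}.
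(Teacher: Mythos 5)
Your reduction is sound and in fact slightly cleaner than the paper's: standardising by $\Sigma^{-1/2}$, invoking Corollary \ref{cor:fisher-GM} and Cram\'er--Rao to sandwich $\Sigma^{1/2}\mathcal{I}(S_n)\Sigma^{1/2}$ between $I_d$ and $\mathbb{E}[B^{-1}]$, and then exploiting the exact second-order identity $\mathbb{E}[B^{-1}]-I_d=\mathbb{E}[(B-I_d)B^{-1}(B-I_d)]$ (valid because $\mathbb{E}[B]=I_d$) is a genuinely different route. The paper instead keeps the first-order resolvent identity $X^{-1}-Y^{-1}=X^{-1}(Y-X)Y^{-1}$, works on an event $\{\|\sum_i a_i^2 W_i\|_{\mathrm{op}}\le\varepsilon\}$ with $\varepsilon$ optimised only at the very end, and controls the deviation probability by the same Tomczak-Jaegermann type inequality you call ``matrix Rosenthal'' (this is Lemma \ref{lemmalogd}; note its right-hand side carries $\log^\delta(d+1)$, not $\log(d+1)$). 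Your fixed threshold $1/2$ combined with the quadratic identity would, if completed, even yield a marginally better power of $\|a\|_{2+2\delta}$ than \eqref{eq:corop}.

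The gap is the tail term. Having derived the identity, you immediately pass to $\mathbb{E}[\|B^{-1}\|_{\mathrm{op}}\|B-I_d\|_{\mathrm{op}}^2]$, correctly observe that a direct H\"older bound would demand an $L^{2(1+\delta)/\delta}$ moment of $\|B-I_d\|_{\mathrm{op}}$ which is unavailable, and propose to repair this by truncating each $A_i$ at a level $K$, using matrix Bernstein for the bounded part and tuning $K$. As written this is only a plan: matrix Bernstein requires uniformly bounded summands, the interaction between the truncated sum and the factor $B^{-1}$ (which depends on the untruncated $A_i$) is never addressed, and the claimed exponents $2\delta/(1+\delta)$ and $\log^\delta(d+1)$ are asserted rather than derived. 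The difficulty is, moreover, self-inflicted: on the complement event you should reuse the very identity you proved, namely $\|(B-I_d)B^{-1}(B-I_d)\|_{\mathrm{op}}=\|B+B^{-1}-2I_d\|_{\mathrm{op}}\le\|B\|_{\mathrm{op}}+\|B^{-1}\|_{\mathrm{op}}+2$, whose $L^{1+\delta}$ norm is finite by \eqref{matrixconditions}, the triangle inequality in $L_{1+\delta}$ and operator convexity of the inverse. H\"older with exponents $1+\delta$ and $(1+\delta)/\delta$ against $\mathbb{P}\{\|B-I_d\|_{\mathrm{op}}>1/2\}^{\delta/(1+\delta)}\lesssim\bigl(\log^\delta(d+1)\|a\|_{2+2\delta}^{2+2\delta}\bigr)^{\delta/(1+\delta)}$ then gives a contribution of order $\log^{\delta^2/(1+\delta)}(d+1)\,\|a\|_{2+2\delta}^{2\delta}$, which is dominated by the right-hand side of \eqref{eq:corop} since $\|a\|_{2+2\delta}\le\|a\|_2=1$. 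With that replacement your argument closes; the truncation--Bernstein machinery is neither needed nor justified in the form you describe.
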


There is a vast literature on quantitative versions of the central limit theorem. The first to obtain efficient bounds for the relative Fisher information of weighted sums were Artstein, Ball, Barthe and Naor \cite{artstein2004rate} (see also the work \cite{johnson2004fisher} of Johnson and Barron) who obtained a $O(\|a\|_4^4)$ upper bound on $I(S_n\| X)$, where $S_n = \sum_{i=1}^n a_i X_i$ for $X_1,\ldots, X_n$ i.i.d.~random variables satisfying a Poincar\'e inequality. In particular, this bound reduces to the sharp rate $O(\frac{1}{n})$ on the main diagonal. Following a series of works on the relative entropy of weighted sums \cite{bobkov2013rate,bobkov2014berry}, Bobkov, Chistyakov and G\"otze investigated in \cite{bobkov2014fisher} upper bounds for the relative Fisher information along the main diagonal under finite moment assumptions. More specifically, their main result asserts that if $\mb{E}|X_1|^s<\infty$ for some $s\in(2,4)$, then
\begin{equation} \label{bcg}
I\Bigl(\frac{1}{\sqrt{n}}\sum_{i=1}^n{X_i}\Big\|Z\Bigr) = O\left(\frac{1}{n^{\frac{s-2}{2}+o(1)}}\right),
\end{equation}
where the $n^{o(1)}$ term is a power of $\log n$, provided that the Fisher information of the sum is finite for some $n$. The exponent $\frac{s-2}{2}$ is sharp in this estimate. Moreover, it is also shown in \cite{bobkov2014fisher} that if $\mb{E}X_1^4<\infty$, then the relative Fisher information decays with the optimal $O(\frac{1}{n})$ rate of convergence. This is a far-reaching extension of the results of \cite{artstein2004rate, johnson2004fisher} on the main diagonal as the Poincar\'e inequality assumption in particular implies finiteness of all moments.


The scalar version of Theorem \ref{coropconvexity} (corresponding to $d=1$) is in various ways weaker than the results of \cite{bobkov2014fisher}. Firstly, it applies only within the class of Gaussian mixtures and it requires the finiteness of a \emph{negative} moment of the random variable besides a positive one. Moreover, even if these assumptions are satisfied, the bound \eqref{eq:corop} yields the rate $O(\frac{1}{n^{c_\delta}})$ with $c_\delta=\frac{\delta^2}{(1+\delta)^2}$ along the main diagonal if $X$ has a finite $2+2\delta$ moment. This is weaker than the sharp $O(\frac{1}{n^{\delta+o(1)}})$ which follows from \cite{bobkov2014fisher}. On the other hand, Theorem \ref{coropconvexity} applies to general coefficients beyond the main diagonal and, in contrast to \cite{artstein2004rate, johnson2004fisher}, does not require the finiteness of all positive moments. More importantly though, \eqref{eq:corop} is multi-dimensional bound with a \emph{subpolynomial} dependence on the dimension $d$. To the best of our knowledge, this is the first such bound for the relative Fisher information \emph{matrix} of a weighted sum and it would be very interesting to extend it to more general classes of random vectors and to obtain sharper rates.

The logarithmic dependence on the dimension in Theorem \ref{coropconvexity} is a consequence of a classical result of Tomczak-Jaegermann \cite{Tom74} on the uniform smoothness of Schatten classes. While Theorem \ref{coropconvexity} is stated in terms of the operator norm, the proof yields an upper bound for any operator monotone matrix norm (see Remark \ref{rem:norms}) in terms of its Rademacher type constants.


\subsection*{Acknowledgements} We are grateful to L\'eonard Cadilhac for helpful discussions.


\section{Concavity of entropy} \label{concavitysec}

This section is devoted to the proof of Theorem \ref{nGMs}. We shall make use of the standard variational formula for entropy which asserts that if $X$ is a continuous random variable, then
\begin{equation}
h(X) = \min\big\{ \mb{E}[ -\log g(X)]: \ g:\R\to\R_+ \mbox{ is a density function }\big\}.
\end{equation}

\begin{proof} [Proof of Theorem \ref{nGMs}]
We start with the Shannon entropy, which corresponds to $\alpha=1$. Fix two unit vectors $(a_1,\ldots,a_n)$ and $(b_1,\ldots,b_n)$ in $\R^n$. For $t\in[0,1]$, consider 
\begin{equation}
X_t \eqdef \sum_{i=1}^n \sqrt{t a_i^2+(1-t)b_i^2} X_i \qquad \mbox{and} \qquad f(t) \eqdef h(X_t),
\end{equation} 
and denote by $g_t:\R\to\R_+$ the density of $X_t$. The statement of the theorem is equivalent to the concavity of the function $f$ on the interval $[0,1]$.  

Let $\lambda,t_1,t_2 \in [0,1]$ and set $t = \lambda t_1 + (1-\lambda)t_2.$ By the variational formula for entropy, we have
\begin{equation} \label{in1}
\begin{split}
\lambda f(t_1) + (1-\lambda) f(t_2) = \lambda \mb{E} [ -\log g_{t_1}&(X_{t_1})] + (1-\lambda) \mb{E}[ -\log g_{t_2}(X_{t_2})] \\  
&\leq  \lambda\mb{E} [ -\log g_{t}(X_{t_1})] + (1-\lambda) \mb{E}[ -\log g_{t}(X_{t_2})] .
\end{split}
\end{equation}
Moreover, since $X_i$ has the same distribution as the independent product $Y_i Z_i$, the stability of Gaussian measure implies the equality in distribution
\begin{equation} \label{eq:dist}
X_t \stackrel{\mr{(d)}}{=} \sqrt{\sum_{i=1}^n \big( ta_i^2+(1-t)b_i^2\big) Y_i^2 } \ Z.
\end{equation}
Therefore, $X_t$ is itself a Gaussian mixture. By the characterisation of \cite[Theorem~2]{eskeGM}, this is equivalent to the complete monotonicity of the function $g_t(\sqrt{\cdot})$. Thus, by Bernstein's theorem, $g_t(\sqrt{\cdot})$ is the Laplace transform of a non-negative Borel measure on $(0,\infty)$ and therefore the function $\varphi_t \eqdef -\log{g_t(\sqrt{\cdot})}$ is concave on $(0,\infty)$. Hence, by \eqref{in1} and \eqref{eq:dist}, we have
\begin{equation}
\begin{split}
\lambda f(t_1) & + (1-\lambda) f(t_2) \\
& \leq \lambda \mb{E} \bigg[ \varphi_t \Big( \sum_{i=1}^n \big( t_1 a_i^2+(1-t_1)b_i^2\big) Y_i^2 Z^2\Big)\bigg] + (1-\lambda) \mb{E} \bigg[ \varphi_t \Big( \sum_{i=1}^n \big( t_2 a_i^2+(1-t_2)b_i^2\big) Y_i^2 Z^2\Big)\bigg] \\
& \leq \mb{E}\bigg[ \varphi_t\bigg( \sum_{i=1}^n \Big( \lambda\big( t_1 a_i^2+(1-t_1)b_i^2\big) + (1-\lambda)\big( t_2 a_i^2+(1-t_2)b_i^2\big) \Big) Y_i^2 Z^2 \bigg) \bigg] \\
& = \mb{E}\bigg[ \varphi_t\Big( \sum_{i=1}^n \big(  ta_i^2 + (1-t)b_i^2 \big) Y_i^2 Z^2 \Big) \bigg] = \mb{E}\bigg[ -\log g_t\Big( \sum_{i=1}^n \sqrt{t a_i^2 + (1-t)b_i^2} X_i \Big) \bigg] = f(t). \\
\end{split}
\end{equation}
This completes the proof of the concavity of Shannon entropy. 

Next, let $\alpha >1$ and consider again $t = \lambda t_1 + (1-\lambda)t_2$. Denoting by $\psi_t = g_t^{\alpha-1}(\sqrt{\cdot})$ and applying the same reasoning, we get
\begin{equation}
\begin{split}
\int_{\mathbb{R}}{g_t^{\alpha}(x)\, \diff x} &= \int_{\mathbb{R}}{g_t(x)g_t^{\alpha-1}(x) \,\diff x} = \mb{E} g_t^{\alpha-1}(X_t) \\
&= \mathbb{E}\bigg[\psi_t\bigg( \sum_{i=1}^n \Big( \lambda\big( t_1 a_i^2+(1-t_1)b_i^2\big) + (1-\lambda)\big( t_2 a_i^2+(1-t_2)b_i^2\big) \Big) Y_i^2 Z^2 \bigg)\bigg].
\end{split}
\end{equation}
Now $\psi_t = e^{-(\alpha-1)\varphi_t}$ is log-convex and thus 
\begin{equation} \label{lamdaholder}
\begin{split}
\int_{\mathbb{R}}{g_t^{\alpha}(x)\, \diff x} &\leq \mb{E}\bigg[ \psi_t^{\lambda}\Bigl(\sum_{i=1}^n\big( t_1a^2_i + (1-t_1)b^2_i\big) Y_i^2 Z^2 \Bigr)\psi_t^{1-\lambda}\Bigl(\sum_{i=1}^n \big(t_2a^2_i + (1-t^2_2)b_i^2\big)Y_i^2 Z^2\Bigr)\bigg] \\ 
& \leq \mb{E} \big[ g_t^{\alpha-1}(X_{t_1})\big]^\lambda \mb{E} \big[ g_t^{\alpha-1}(X_{t_2})\big]^{1-\lambda}
\end{split}
\end{equation} 
by H{\"o}lder's inequality and \eqref{eq:dist}. By two more applications of H{\"o}lder's inequality, we get
\begin{equation}\label{alphaholderlast0}
\int_{\mathbb{R}}{g_{t_1}(x)g_t(x)^{\alpha-1}\, \diff x} \leq \Bigl(\int_{\mathbb{R}}{g_{t_1}^{\alpha}(x)\, \diff x}\Bigr)^{\frac{1}{\alpha}}\Bigl(\int_{\mathbb{R}}{g_{t}^{\alpha}(x)\, \diff x}\Bigr)^{\frac{\alpha-1}{\alpha}} 
\end{equation}
and 
\begin{equation} \label{alphaholderlast}
\int_{\mathbb{R}}{g_{t_2}(x)g_t(x)^{\alpha-1}\, \diff x} \leq \Bigl(\int_{\mathbb{R}}{g_{t_2}^{\alpha}(x) \, \diff x}\Bigr)^{\frac{1}{\alpha}}\Bigl(\int_{\mathbb{R}}{g_{t}^{\alpha}(x)\, \diff x}\Bigr)^{\frac{\alpha-1}{\alpha}}.
\end{equation}
Combining \eqref{lamdaholder}, \eqref{alphaholderlast0} and \eqref{alphaholderlast} we thus obtain 
\begin{equation}
\Bigl(\int_{\mathbb{R}}{g_t^{\alpha}(x)\, \diff x}\Bigr)^{\frac{1}{\alpha}} \leq \Bigl(\int_{\mathbb{R}}{g_{t_1}^{\alpha}(x)\, \diff x}\Bigr)^{\frac{\lambda}{\alpha}}\Bigl(\int_{\mathbb{R}}{g_{t_2}^{\alpha}(x)\, \diff x}\Bigr)^{\frac{1-\lambda}{\alpha}}
\end{equation}
which is exactly the claimed concavity of R\'enyi entropy.
\end{proof}

\begin{remark}
One may wonder whether Theorem \ref{nGMs} can be extended to Gaussian mixtures on $\R^d$ in the sense of Definition \ref{def}.  Denoting by $\sqrt{M}$ the positive semidefinite square root of a positive semidefinite matrix $M$ and repeating the above argument, we would need the validity of the inequality
\begin{equation} \label{eq:want1}
\forall \ \lambda\in(0,1),\qquad g\big( \sqrt{\lambda A+(1-\lambda)B}z\big) \leq g\big(\sqrt{A}z\big)^\lambda g\big(\sqrt{B}z\big)^{1-\lambda}
\end{equation}
where $g:\R^d\to\R_+$ is the density of a Gaussian mixture,  A and B are positive semidefinite $d\times d$ matrices and $z$ is a vector in $\R^d$.  The validity of \eqref{eq:want1} for a Gaussian density with arbitrary covariance is equivalent to the operator concavity of the matrix function
\begin{equation}
f(X) \eqdef \sqrt{X}Y\sqrt{X}
\end{equation}
for an arbitrary positive semidefinite matrix $Y$. The following counterexample to this statement was communicated to us by L\'eonard Cadilhac. As the function $f$ takes values in the cone of positive semidefinite matrices, operator concavity is equivalent to operator monotonicity (see the proof of \cite[Theorem~V.2.5]{Bha97}). Take two non-negative matrices $A,Y$ such that $Y \preceq A$ but $Y^2 \npreceq A^2$. Then, the corresponding function $f(X) = \sqrt{X} Y\sqrt{X}$ satisfies $f(Y)=Y^2$ and $f(A) = \sqrt{A} Y \sqrt{A}  \preceq A^2$ since $Y \preceq A$. Therefore, $f(Y) \npreceq f(A)$ and thus $f$ is not operator monotone or concave.
\end{remark}


\section{Convexity of Fisher information} \label{fishersec}


\subsection{Warm-up: the Fisher information of independent products} Before showing the general argument which leads to Proposition \ref{prop:conv}, we present a short proof for the case of mixtures of dilates of a \emph{fixed} distribution which corresponds exactly to the Fisher information of a product of independent random variables. As this is a special case of Bobkov's \cite[Proposition~15.2]{bobkov2014fisher}, we shall disregard rigorous integrability assumptions for the sake of simplicity of exposition.

\begin{theorem} \label{logconcavegen}
Let $W$ be a random variable with zero mean and smooth-enough density and let $Y$ be an independent positive random variable. Then, 
\begin{equation}
\frac{1}{\mathbb{E}Y^2\Var{(W)}} \leq I(YW) \leq \mb{E}\Big[{\frac{I(W)}{Y^2}}\Big].
\end{equation}
\end{theorem}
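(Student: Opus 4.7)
The plan is to derive each side of the displayed inequality from a standard tool, exactly as one expects for this warm-up version of the general convexity result: the upper bound will come from the convexity of Fisher information in one dimension (Proposition \ref{prop:conv} with $d=1$) applied to the natural mixture representation of the density of $YW$, and the lower bound will be just the scalar Cram\'er--Rao inequality.

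For the upper bound, let $f$ denote the density of $W$ and condition on $Y$: the density of $YW$ is
\begin{equation}
g(x) = \mb{E}\Big[ \tfrac{1}{Y} f(x/Y)\Big],
\end{equation}
which realises $g$ as a mixture, with mixing measure the law $\mu_Y$ of $Y$, of the rescaled densities $g_y(x) \eqdef \frac{1}{y} f(x/y)$ (the density of $yW$). The scaling identity that makes the bound clean is
\begin{equation}
I(g_y) = \tfrac{1}{y^2}\, I(W),
\end{equation}
which is immediate from the change of variable $u = x/y$ in the integral defining $I(g_y)$. Inserting this into the scalar case of Proposition \ref{prop:conv} yields
\begin{equation}
I(YW) = I(g) \,\leq\, \int_0^\infty I(g_y)\, \diff\mu_Y(y) = \mb{E}\Big[\tfrac{I(W)}{Y^2}\Big],
\end{equation}
which is the required upper bound. (If the right-hand side is infinite the bound is vacuous, so without loss of generality the integrability hypothesis of Proposition \ref{prop:conv} holds.)

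For the lower bound, independence of $Y$ and $W$ together with $\mb{E}W = 0$ give $\mb{E}[YW] = 0$ and $\Var{(YW)} = \mb{E}[Y^2]\,\Var{(W)}$. The scalar Cram\'er--Rao inequality, i.e.\ the $d=1$ case of \eqref{eq:cr}, then reads $I(YW) \geq 1/\Var{(YW)}$, which is precisely the left-hand inequality.

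In summary, the whole content of the statement is \emph{scaling of Fisher information} $+$ \emph{convexity along mixtures} $+$ \emph{Cram\'er--Rao}. There is no real obstacle in the argument; the only point that would need care in a fully rigorous write-up (explicitly set aside in the paper) is checking the regularity of $g$ and the integrability condition $\int I(g_y)\,\diff\mu_Y(y)<\infty$ that validates the use of Proposition \ref{prop:conv}, both of which are routine under the informal ``smooth-enough density'' assumption on $W$.
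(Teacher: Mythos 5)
Your proof is correct, but it follows a genuinely different route from the paper's. The paper proves the upper bound by a direct, self-contained computation: writing the density of $W$ as $e^{-V}$, differentiating the mixture density $f(x)=\mb{E}\big[\tfrac{1}{Y}e^{-V(x/Y)}\big]$ under the expectation, applying the Cauchy--Schwarz inequality pointwise to get $f'(x)^2 \leq f(x)\,\mb{E}\big[\tfrac{V'(x/Y)^2}{Y^3}e^{-V(x/Y)}\big]$, and then integrating and changing variables --- in effect re-deriving the convexity of $u\mapsto \frac{(u')^2}{u}$ by hand in this one-parameter scaling family. You instead invoke the general Jensen inequality of Proposition \ref{prop:conv} (in its scalar form, i.e.\ Bobkov's result) applied to the mixture $g=\int g_y\,\diff\mu_Y(y)$ together with the scaling identity $I(g_y)=I(W)/y^2$; the lower bound via Cram\'er--Rao is identical in both arguments. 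Your derivation is logically sound and not circular (the paper's proof of Proposition \ref{prop:conv} does not rely on this theorem), and it is arguably cleaner once the general result is available; what it gives up is the elementary, self-contained character that the paper intends for this ``warm-up'': Proposition \ref{prop:conv} for a general mixing measure requires the lower semicontinuity and compactness machinery of Section \ref{dimensionsec}, whereas the paper's direct Cauchy--Schwarz computation needs nothing beyond differentiation under the integral sign. Your closing remark correctly isolates the only delicate points (regularity of $g$ and the integrability hypothesis), which the paper likewise sets aside.
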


\begin{proof}

The first inequality is the Cram{\'e}r-Rao lower bound. Suppose that $W$ has density $e^{-V}$ with $V$ nice enough. Then, $YW$ has density 
\begin{equation}
f(x) \eqdef \mb{E}\Big[\frac{1}{Y}e^{-V\big(\frac{x}{Y}\big)}\Big]
\end{equation}
and thus, differentiating under the expectation and using Cauchy--Schwarz, we get
\begin{equation}
\begin{split}
\p{f}(x)^2 &= \mb{E}\Big[\frac{\p{V}(\frac{x}{Y})}{Y^2}e^{-V\big(\frac{x}{Y}\big)}\Big]^2 \leq \mb{E}\Big[\frac{1}{Y}e^{-V\big(\frac{x}{Y}\bigl)}\Big] \mb{E}\Big[\frac{\p{V}(\frac{x}{Y})^2}{Y^3}e^{-V\big(\frac{x}{Y}\big)}\Big]
= f(x)\ \mb{E}\Big[\frac{\p{V}(\frac{x}{Y})^2}{Y^3}e^{-V\big(\frac{x}{Y}\big)}\Big].
\end{split}
\end{equation}
Thus, 
\begin{equation*}
\begin{split}
I(X) = \int_{\mathbb{R}}{\frac{\p{f}(x)^2}{f(x)}\, \diff x} & \leq \int_{\mathbb{R}}\mb{E}\Big[\frac{\p{V}(\frac{x}{Y})^2}{Y^3}e^{-V\big(\frac{x}{Y}\big)}\Big] \, \diff x = \mb{E}\Big[ \frac{1}{Y^2} \int_{\mathbb{R}} \frac{\p{V}(\frac{x}{Y})^2}{Y}e^{-V\big(\frac{x}{Y}\big)}\, \diff x \Big]
\\ & =  \mb{E}\Big[{\frac{1}{Y^2}}\Big] \mb{E}\Big[{\p{V}(W)^2}\Big]=  \mb{E}\Big[{\frac{I(W)}{Y^2}}\Big]. \qedhere
\end{split}
\end{equation*}
\end{proof}


\subsection{Proof of Proposition \ref{prop:conv}} \label{dimensionsec}

\newcommand{\la}{\lambda}

We start by proving the two-point convexity of $\mc{I}$.

\begin{proposition} \label{opconvex}
The Fisher information matrix is operator convex on $\mathcal{F}_d$, that is, for $f_1,f_2 \in \mathcal{F}_d$,
\begin{equation} \label{f1f2convexityI}
\forall \ \theta\in[0,1],\qquad \mathcal{I}\big(\theta f_1+(1-\theta)f_2\big) \preceq \theta\mathcal{I}(f_1) + (1-\theta)\mathcal{I}(f_2).
\end{equation}
\end{proposition}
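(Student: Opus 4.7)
The plan is to reduce the operator inequality to a scalar convexity statement by testing against vectors $u\in\R^d$. For any $u\in\R^d$, one has the quadratic-form representation
\begin{equation}
\langle \mc{I}(f)u,u\rangle = \int_{\R^d} \frac{|u\cdot\nabla f(x)|^2}{f(x)}\,\diff x,
\end{equation}
so it suffices to prove the scalar convexity
\begin{equation}
\int_{\R^d} \frac{|u\cdot\nabla(\theta f_1+(1-\theta)f_2)|^2}{\theta f_1+(1-\theta)f_2}\,\diff x \leq \theta\int_{\R^d} \frac{|u\cdot\nabla f_1|^2}{f_1}\,\diff x + (1-\theta)\int_{\R^d} \frac{|u\cdot\nabla f_2|^2}{f_2}\,\diff x
\end{equation}
for every $u\in\R^d$ and every $\theta\in[0,1]$.

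First I would invoke the classical fact that the perspective function
\begin{equation}
\Phi:\R\times(0,\infty) \to \R_+,\qquad \Phi(a,b) \eqdef \frac{a^2}{b}
\end{equation}
is jointly convex; this follows for instance by computing its Hessian, or from the variational identity $\Phi(a,b) = \sup_{c\in\R}\{2ac - bc^2\}$, which realises $\Phi$ as a supremum of linear functions and hence is convex.

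Next I would apply $\Phi$ pointwise to the pairs $(u\cdot\nabla f_i(x), f_i(x))$, $i=1,2$. Since $u\cdot\nabla f$ and $f$ are both linear in the density, the joint convexity of $\Phi$ yields, for every $x\in\R^d$,
\begin{equation}
\frac{\bigl|u\cdot\nabla(\theta f_1+(1-\theta)f_2)(x)\bigr|^2}{\theta f_1(x)+(1-\theta)f_2(x)} \leq \theta\,\frac{|u\cdot\nabla f_1(x)|^2}{f_1(x)} + (1-\theta)\,\frac{|u\cdot\nabla f_2(x)|^2}{f_2(x)}.
\end{equation}
Integrating this pointwise bound over $\R^d$ produces exactly the scalar inequality displayed above, which in turn gives $\langle \mc{I}(\theta f_1+(1-\theta)f_2)u,u\rangle \leq \theta\langle \mc{I}(f_1)u,u\rangle+(1-\theta)\langle \mc{I}(f_2)u,u\rangle$ for all $u$, i.e.\ the desired operator inequality \eqref{f1f2convexityI}.

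There is no real obstacle: the argument is essentially the standard proof of convexity of the Fisher information functional (Bobkov \cite{bobkov2022upper}), upgraded to matrices simply by testing against arbitrary vectors $u$. The only point requiring a small amount of care is integrability, but this is handled by the hypothesis $f_1,f_2\in\mc{F}_d$ together with the pointwise domination above, which allows one to justify all integrals simultaneously (and also ensures that the off-diagonal entries $\int \partial_i f \partial_j f / f\,\diff x$ are well defined via the Cauchy–Schwarz bound $|\partial_i f\,\partial_j f|/f \leq \tfrac{1}{2}((\partial_i f)^2+(\partial_j f)^2)/f$). The subsequent measure-theoretic integration step leading to Proposition~\ref{prop:conv} is then a routine Jensen/Fubini argument on $\mc{F}_d$.
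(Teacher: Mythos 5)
Your proof is correct and is essentially the paper's argument: the paper establishes the joint operator convexity of the matrix perspective function $R(x,\lambda)=xx^T/\lambda$ via the rank-one identity $(\lambda x-\mu y)(\lambda x-\mu y)^T\succeq 0$, which, after testing against a vector $u$, is exactly your scalar convexity of $\Phi(a,b)=a^2/b$ applied to $a=u\cdot\nabla f$, $b=f$. Both proofs then conclude by the linearity of $f\mapsto(\nabla f,f)$ and integration, so the two routes differ only in whether the quadratic-form reduction is performed before or after invoking the perspective-function convexity.
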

\begin{proof}
First we claim that the function $R:\mathbb{R}^d \times \mathbb{R}_{+} \rightarrow \mathbb{R}^{d \times d}$, given by $R(x,\lambda) = \frac{xx^T}{\lambda}$
is jointly operator convex. To prove this, we need to show that for every $\theta \in (0,1),$ $x,y \in \mathbb{R}^d$ and $\la,\mu > 0$,
\begin{equation} \label{Rconvexity}
R(\theta x + (1-\theta)y, \theta \la + (1-\theta)\mu) \preceq \theta R(x,\la) + (1-\theta) R(y,\mu). 
\end{equation}
After rearranging, this can be rewritten as 
\begin{equation}
\theta(1-\theta)\bigl(\la^2 xx^T +\mu^2 yy^T- \la\mu xy^T -\la\mu yx^T \bigr) \succeq 0,
\end{equation}
which is true since it is equivalent to $(\lambda x - \mu y) ( \lambda x - \mu y)^T \succeq 0$. 

Since the Fisher information matrix can be written as
\begin{equation}
\mc{I}(f) = \int_{\R^d} R\big( \nabla f(x), f(x)\big) \,\diff x,
\end{equation}
the conclusion follows by the convexity of $R$ and the linearity of $\nabla$ and $\int$.
\end{proof}

In order to derive the general Jensen inequality of Proposition \ref{prop:conv} from Proposition \ref{opconvex}, we will use a somewhat involved compactness argument that was invoked in \cite{bobkov2014fisher,bobkov2022upper}. We point out that these intricacies arise since the space $\mc{F}_d$ of smooth densities in $\R^d$ is infinite-dimensional. As our argument shares similarities with Bobkov's, we shall only point out the necessary modifications which need to be implemented. We stard by proving the following technical lemma.

\begin{lemma} \label{lemmaquadsemi}
Let $X, \{X_k\}_{k\geq 1}$ be random vectors in $\mathbb{R}^d$ such that $X_k \Rightarrow X$ weakly.
\begin{enumerate} [label =(\roman*)]
 \item \label{lowersemifirstpart} If $\sup_k{\op{\mathcal{I}(X_k)}} < \infty$, then for every $x\in\mb{S}^{d-1}$,
\begin{equation} \label{qformlowersemicont}
\ix{X} \leq \liminf_{k \to \infty}{\ix{X_k}}.
\end{equation}
\item Moreover, we always have
\begin{equation} \label{oplowersemi}
\op{\mathcal{I}(X)} \leq \liminf_{k \to \infty}{\op{\mathcal{I}(X_k)}}.
\end{equation}
\end{enumerate}
\end{lemma}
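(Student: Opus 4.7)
The plan is to prove both statements via a dual/variational representation of the quadratic form $\ix{X}$, which reduces each assertion to a straightforward weak-convergence computation. For a random vector $X$ with smooth density $f:\R^d \to \R_+$, integration by parts and the Cauchy-Schwarz inequality yield, for every $\phi \in C_c^\infty(\R^d)$,
\begin{equation}
\Big( \int_{\R^d} f(z) \langle \nabla \phi(z), x\rangle \,\diff z\Big)^2 = \Big( \int_{\R^d} \phi(z) \langle \nabla f(z), x\rangle \,\diff z\Big)^2 \leq \int_{\R^d} f(z)\phi(z)^2 \,\diff z \cdot \ix{X},
\end{equation}
that is, $\ix{X} \geq \big(\mb{E}\langle \nabla \phi(X), x\rangle\big)^2/\mb{E}[\phi(X)^2]$ for each nonzero $\phi$. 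Approximating the directional score $z \mapsto \langle \nabla f(z), x\rangle/f(z)$ in $L^2(f\,\diff z)$ by compactly supported smooth functions establishes the variational identity
\begin{equation}
\ix{X} = \sup_{\phi \in C_c^\infty(\R^d)\setminus\{0\}} \frac{\big(\mb{E}\langle \nabla \phi(X), x\rangle\big)^2}{\mb{E}[\phi(X)^2]}.
\end{equation}
The right-hand side makes sense for an arbitrary Borel probability distribution on $\R^d$, and is finite if and only if $X$ has a density whose weak directional derivative in direction $x$ lies in $L^2(f\,\diff z)$, in which case it coincides with the integral definition of $\ix{X}$.

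Granted this, part (i) is immediate. Fix $\phi \in C_c^\infty(\R^d)$; the functions $z\mapsto\langle \nabla \phi(z), x\rangle$ and $z\mapsto\phi(z)^2$ are bounded and continuous on $\R^d$, so the weak convergence $X_k \Rightarrow X$ yields $\mb{E}\langle \nabla \phi(X_k), x\rangle \to \mb{E}\langle \nabla \phi(X), x\rangle$ and $\mb{E}[\phi(X_k)^2] \to \mb{E}[\phi(X)^2]$. Combining this with the pointwise lower bound $\ix{X_k} \geq (\mb{E}\langle \nabla \phi(X_k), x\rangle)^2/\mb{E}[\phi(X_k)^2]$ and passing first to the liminf and then to the supremum over $\phi$ gives \eqref{qformlowersemicont}. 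The hypothesis $\sup_k \op{\mathcal{I}(X_k)} < \infty$ ensures that the resulting finite variational quantity for $X$ genuinely equals $\ix{X}$.

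Part (ii) is deduced from (i) by a subsequence argument. If $\liminf_k \op{\mathcal{I}(X_k)} = \infty$ there is nothing to prove; otherwise, extract a subsequence along which $\op{\mathcal{I}(X_{k_j})} \to L := \liminf_k \op{\mathcal{I}(X_k)} < \infty$. Then $\sup_j \op{\mathcal{I}(X_{k_j})}$ is finite, so part (i) applies to the subsequence and yields
\begin{equation}
\forall\ x\in\mb{S}^{d-1},\qquad \ix{X} \leq \liminf_{j\to\infty}\ix{X_{k_j}} \leq L,
\end{equation}
and taking the supremum over unit vectors delivers \eqref{oplowersemi}. The main technical delicacy lies in the rigorous justification of the variational formula in the required generality — specifically, showing that finiteness of the supremum forces the limit $X$ to admit a density regular enough for $\ix{X}$ to be defined in the integral sense. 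This is classical and implicit in \cite{bobkov2022upper}, and in case of difficulty can be bypassed by first convolving each $X_k$ and $X$ with a Gaussian of small variance $\e$, applying the argument to the smoothed variables (which have $C^\infty$ densities), and then letting $\e\to 0$ using the monotonicity of Fisher information along Gaussian convolutions.
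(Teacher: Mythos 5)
Your argument is correct in substance but takes a genuinely different route from the paper. The paper proves part (i) by a compactness--Fatou argument: it invokes the proof of \cite[Proposition 14.2]{bobkov2022upper} (this is where the hypothesis $\sup_k\op{\mathcal{I}(X_k)}<\infty$ enters) to extract from any subsequence a further subsequence along which $f_{k_j}\to f$ and $\nabla f_{k_j}\to\nabla f$ almost everywhere, and then applies Fatou's lemma to the nonnegative integrand $\langle \nabla f\,\nabla f^T x,x\rangle f^{-1}\mathbb{I}_{\{f>0\}}$. You instead write $\ix{X}$ as a supremum over $\phi\in C_c^\infty(\R^d)$ of functionals that are continuous under weak convergence, which makes lower semicontinuity formal; part (ii) is handled by the identical subsequence argument in both proofs. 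Your route is arguably more transparent and, once the variational identity is granted for the limit $X$, does not even use the uniform bound on $\op{\mathcal{I}(X_k)}$ for part (i) --- that hypothesis is only needed to guarantee that $X$ admits a density regular enough for the supremum to coincide with the integral definition of $\ix{X}$, exactly as you note. Two small points of care. First, your ``if and only if'' for the single-direction variational quantity is not literally correct: for $X=(X_1,0)$ in $\R^2$ with $I(X_1)<\infty$ and $x=e_1$, the supremum is finite yet $X$ has no density on $\R^2$; what is true (and what you actually need) is that finiteness of the supremum \emph{uniformly over all directions} --- which follows here from $\sup_k\op{\mathcal{I}(X_k)}<\infty$ --- forces $X$ to have an absolutely continuous density with square-integrable score, a Huber-type regularity statement that is the real content being deferred. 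Second, in the Cauchy--Schwarz step and in the $0/0$ convention for test functions vanishing on the support of $f$, one uses that $\nabla f=0$ wherever the smooth nonnegative density $f$ vanishes; this is harmless. Your Gaussian-smoothing fallback is a legitimate way to sidestep the regularity issue, provided one also invokes the known continuity $\mathcal{I}(X+\sqrt{\e}Z)\to\mathcal{I}(X)$ as $\e\to0$ for $X$ with finite Fisher information, which is not itself a consequence of the semicontinuity being proved. In short: both proofs outsource the same hard analytic step to the literature, yours to the variational characterisation of Fisher information and the paper's to Bobkov's compactness lemma.
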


\begin{proof}
We start with \eqref{qformlowersemicont}. It clearly suffices to show that any subsequence of $\{X_k\}$ has a further subsequence for which the conclusion holds. If $\op{\mathcal{I}(X_k)} \leq I < \infty$ for all $k\geq1$, then
\begin{equation}
I(X_k) = \mathrm{tr}(\mathcal{I}(X_k)) \leq d\op{\mathcal{I}(X_k)} \leq dI < \infty.
\end{equation}
Write $f_k$ and $f$ for the densities of $X_k$ and $X$ respectively. Choose and fix any subsequence of $\{f_k\}$. 
By the proof of \cite[Proposition 14.2]{bobkov2022upper}, using the boundedness of Fisher informations, there is a further subsequence, say $f_{k_j}$, for which $f_{k_j} \to f$ and $\nabla f_{k_j} \to \nabla f$ a.e. as $j \to \infty$. Therefore 
\begin{equation}
\lim_{j \to \infty}{\Bigl\langle \frac{\nabla f_{k_j}(u)\nabla f_{k_j}(u)^T}{f_{k_j}(u)}x,x\Bigr\rangle \mathbb{I}_{\{f_{k_j}(u) > 0\}}} = \Bigl\langle\frac{\nabla f(u)\nabla f(u)^T}{f(u)}x,x\Bigr\rangle\mathbb{I}_{\{f(u) > 0\}} 
\end{equation}
for almost every $u$. Integration with respect to $u$, linearity and Fatou's lemma yield \eqref{qformlowersemicont}.

To prove \eqref{oplowersemi}, fix a subsequence $X_{k_j}$ for which the $\liminf$ in \eqref{oplowersemi} is attained and without loss of generality assume that it is finite.  Then the subsequence satisfies $\sup_j \|\mc{I}(X_{k_j})\|_{\mr{op}} <\infty$ and thus by \eqref{qformlowersemicont} for every $x\in\mb{S}^{d-1}$ we have
\begin{equation}
\ix{X} \leq \liminf_{j \to \infty}{\ix{X_{k_j}}} \leq \liminf_{j\to\infty} \big\|\mc{I}(X_{k_j})\big\|_{\mr{op}} = \liminf_{k \to \infty}{\op{\mathcal{I}(X_k)}}.
\end{equation}
Taking a supremum over $x\in\mb{S}^{d-1}$ concludes the proof as $\mc{I}(X)$ is positive semi-definite. 
\end{proof}

Equipped with the lower semi-continuity of $\mc{I}$, we proceed to the main part of the proof.

\begin{proof} [Proof of Proposition \ref{prop:conv}]
Inequality \eqref{f1f2convexityI} may be extended to arbitrary finite mixtures by induction, that is 
if $p_1,\ldots,p_N\geq0$ satisfy $\sum_{i=1}^N{p_i} = 1,$ then
\begin{equation} \label{jensenfinitemixtures}
\mathcal{I}\Big( \sum_{i=1}^N p_if_i\Big) \preceq \sum_{i=1}^N{p_i\mathcal{I}(f_i)}.
\end{equation}
We need to extend \eqref{jensenfinitemixtures} to arbitrary mixtures. We write $\mathcal{F}_d(I) =\{f \in \mathcal{F}_d: \op{\mathcal{I}(f)} \leq I\}$ and $\mathcal{F}_d(\infty) = \cup_{I}{\mathcal{F}_d(I)}$. By the assumption $\int_{\mc{F}_d} \|\mc{I}(g)\|_{\mr{op}} \,\diff\pi(g) <\infty$, we deduce that the measure $\pi$ is supported on $\mathcal{F}_d(\infty)$. We shall prove that 
\begin{equation} \label{sufficesconvexityofquadr}
\forall x\in\mb{S}^{d-1},\qquad \left\langle \mc{I}\Big( \int_{\mc{F}_d} g\,\diff \pi(g) \Big)x,x \right\rangle \leq \int_{\mathcal{F}_d}{\ix{g}\, \diff\pi(g)}.
\end{equation}
Fix $x \in \mathbb{S}^{d-1}$ and $I\in\N$. By the operator convexity of the Fisher information matrix (Proposition \ref{opconvex}), the functional 
\begin{equation}
f \rightarrow \ix{f}
\end{equation}
is convex and by Lemma \ref{lemmaquadsemi} lower semi-continuous on $\mathcal{F}_d(I)$. Again by operator convexity, the set 
$\mathcal{F}_d(I) $  is convex and by Lemma \ref{lemmaquadsemi} it is closed. Now we may repeat exactly the same proof as in \cite[Proposition 15.1, Steps 1-2]{bobkov2022upper},
but working with the functional $\ix{f}$ instead of the Fisher information $I(f)$, to obtain \eqref{sufficesconvexityofquadr} if the measure $\pi$ is supported on $\mc{F}_d(I)$.

To derive inequality \eqref{sufficesconvexityofquadr} in general, fix $I_0$ large enough such that $\pi(\mc{F}_d(I_0))>\frac{1}{2}$ and for $I\geq I_0$ write the inequality \eqref{sufficesconvexityofquadr} for the restriction of $\pi$ to $\mc{F}_d(I)$, namely
\begin{equation} \label{eq:47}
\left\langle \mc{I}\bigg( \frac{1}{\pi(\mc{F}_d(I))} \int_{\mc{F}_d(I)} g\, \diff \pi(g) \bigg)x,x \right\rangle \leq  \frac{1}{\pi(\mc{F}_d(I))}  \int_{\mathcal{F}_d(I)}{\ix{g}\, \diff\pi(g)}.
\end{equation}
Denoting by $f_I$ the density on the left-hand side of the inequality, we have that $f_I$ converges weakly to the density $\int_{\ms{F}_d} g\,\diff\pi(g)$ as $I\to\infty$ and moreover \eqref{eq:47} yields
\begin{equation}
\forall \ I\geq I_0,\qquad \big\| \mc{I}(f_I)\big\|_{\mr{op}} \leq \frac{1}{\pi(\mc{F}_d(I))} \int_{\mc{F}_d(I)} \big\| \mc{I}(g)\big\|_{\mr{op}}\,\diff\pi(g) \leq 2  \int_{\mc{F}_d} \big\| \mc{I}(g)\big\|_{\mr{op}}\,\diff\pi(g) <\infty.
\end{equation}
Therefore, the assumptions of \eqref{qformlowersemicont} are satisfied for $\{f_I \}_{I\geq I_0}$ and thus
\begin{equation} \begin{split}
\left\langle \mc{I}\Big( \int_{\mc{F}_d} g\,\diff \pi(g) \Big)x,x \right\rangle & \leq \liminf_{I\to\infty} \left\langle \mc{I}\bigg( \frac{1}{\pi(\mc{F}_d(I))} \int_{\mc{F}_d(I)} g\, \diff \pi(g) \bigg)x,x \right\rangle 
\\ & \stackrel{\eqref{eq:47}}{\leq} \liminf_{I\to\infty} \frac{1}{\pi(\mc{F}_d(I))}  \int_{\mathcal{F}_d(I)}{\ix{g}\, \diff\pi(g)} = \int_{\mathcal{F}_d}{\ix{g}\, \diff\pi(g)},
\end{split}
\end{equation}
and this concludes the proof.
\end{proof}

\begin{proof} [Proof of Corollary \ref{cor:fisher-GM}]
In view of \eqref{eq:dens-GM} and Proposition \ref{prop:conv}, we have
\begin{equation}
\mathcal{I}({\bf Y}Z) = \mathcal{I}\bigg( \mb{E}_{{\bf Y}}\Big[ \frac{1}{\det(\sqrt{2\pi} {\bf Y})} e^{-|{\bf Y^{-1}} \cdot \ |^2/2} \Big] \bigg) \preceq  \mb{E}_{{\bf Y}} \big[\mc{I}({\bf Y}Z) \big]= \mathbb{E}\big[ ({\bf YY}^T)^{-1}\big],
\end{equation}
since the Fisher information matrix of a Gaussian vector with covariance matrix $\Sigma$ is $\Sigma^{-1}$.
\end{proof}


\section{CLT for the Fisher information matrix}

Before delving into the proof of Theorem \ref{coropconvexity}, we shall discuss some geometric preliminaries. Recall that a normed space $(V,\|\cdot\|_V)$ has Rademacher type $p\in[1,2]$ with constant $T\in(0,\infty)$ if for every $n\in\N$ and every $v_1,\ldots,v_n \in V$, we have
\begin{equation} \label{eq:def-typ}
\frac{1}{2^n} \sum_{\e \in\{-1,1\}^n} \Big\| \sum_{i=1}^n \e_i v_i\Big\|_V^p \leq T^p \sum_{i=1}^n \|v_i\|_V^p.
\end{equation}
The least constant $T$ for which this inequality holds will be denoted by $\Tsf_p(V)$. A standard symmetrisation argument (see, for instance, \cite[Proposition~9.11]{LT91}) shows that for any $n\in\N$ and any $V$-valued random vectors $V_1,\ldots,V_n$ with $\mb{E}[V_i]=0$ we have
\begin{equation} \label{eq:lt}
\mb{E}\Big\| \sum_{i=1}^n V_i\Big\|_V^p \leq \big( 2\Tsf_p(V)\big)^p \sum_{i=1}^n \mb{E}\|V_i\|_V^p.
\end{equation}

We denote by $\M_d(\R)$ the vector space of all $d\times d$ matrices with real entries. We shall consider the $p$-Schatten trace class $\msf{S}_p^d$ of $d\times d$ matrices. This is the normed space $\msf{S}_p^d = (\M_d(\R),\|\cdot\|_{\msf{S}_p})$, where for a $d\times d$ real matrix $A$, we denote
\begin{equation}
\|A\|_{\msf{S}_p} \eqdef \Big( \sum_{i=1}^d \sigma_i(A)^p\Big)^{1/p}
\end{equation}
and by $\sigma_1(A) \geq \cdots \geq \sigma_d(A)$ the singular values of $A$. Evidently, $\|\cdot\|_{\mr{op}} = \|\cdot\|_{\msf{S}_\infty}$. A classical result of Tomczak-Jaegermann \cite{Tom74} (see also \cite{BCL94} for the exact values of the constants) asserts that if $p\in[1,2]$, then $\msf{S}_p^d$ has Rademacher type $p$ constant $T_p\big(\msf{S}_p^d\big)=1$ and if $p\geq2$, then $\msf{S}_p^d$ has Rademacher type 2 constant $\Tsf_2\big(\msf{S}_p^d\big) \leq \sqrt{p-1}$. We shall use the following consequence of this.

\begin{lemma} \label{lemmalogd}
Fix $n,d\in\N$ and let $W_1,\ldots,W_n$ be i.i.d.~random $d\times d$ matrices with $\mb{E}[W_i]=0$. For any $\delta\in(0,1]$ and any vector $b=(b_1,\ldots,b_n)\in\R^n$, we have
\begin{equation} \label{p>2}
p\in[2,\infty) \qquad \Longrightarrow \qquad \mb{E}\Big\|{\sum_{i=1}^n b_i {W_i}}\Big\|_{\msf{S}_p}^{1+\delta} \leq 2^{1+\delta} (p-1)^\delta \mb{E}\big[\|W_1\|_{\msf{S}_p}^{1+\delta}\big]\ \|b\|_{1+\delta}^{1+\delta}
\end{equation}
and
\begin{equation} \label{p<2}
p\in[1+\delta,2] \qquad \Longrightarrow \qquad \mb{E}\Big\|{\sum_{i=1}^n b_i {W_i}}\Big\|_{\msf{S}_p}^{1+\delta} \leq 2^{1+\delta} \mb{E}\big[\|W_1\|_{\msf{S}_p}^{1+\delta}\big]\ \|b\|_{1+\delta}^{1+\delta}.
\end{equation}
Moreover,
\begin{equation} \label{opop}
\mb{E}\Big\|{\sum_{i=1}^n b_i {W_i}}\Big\|_{\mr{op}}^{1+\delta} \leq (2e)^{1+\delta} \log^\delta (d+1) \mb{E}\big[\|W_1\|_{\mr{op}}^{1+\delta}\big]\ \|b\|_{1+\delta}^{1+\delta}.
\end{equation}
\end{lemma}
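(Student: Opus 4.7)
My plan reduces both Schatten-norm inequalities \eqref{p>2} and \eqref{p<2} to a computation of the Rademacher type-$(1+\delta)$ constant $\Tsf_{1+\delta}(\msf{S}_p^d)$. Indeed, applying \eqref{eq:lt} to the mean-zero i.i.d.\ sequence $V_i = b_i W_i$ in $\msf{S}_p^d$ gives
\begin{equation*}
\mb{E}\Big\|\sum_{i=1}^n b_i W_i\Big\|_{\msf{S}_p}^{1+\delta} \leq \bigl(2\Tsf_{1+\delta}(\msf{S}_p^d)\bigr)^{1+\delta}\|b\|_{1+\delta}^{1+\delta}\mb{E}\big[\|W_1\|_{\msf{S}_p}^{1+\delta}\big],
\end{equation*}
so \eqref{p<2} and \eqref{p>2} will follow once one establishes $\Tsf_{1+\delta}(\msf{S}_p^d) \leq 1$ for $1+\delta \leq p \leq 2$ and $\Tsf_{1+\delta}(\msf{S}_p^d) \leq (p-1)^{\delta/(1+\delta)}$ for $p \geq 2$, respectively.

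For the first, since $1+\delta \leq p$, a short downgrade of the Tomczak-Jaegermann equality $\Tsf_p(\msf{S}_p^d) = 1$ suffices: Jensen on the Rademacher probability space gives $\|\sum_i \e_i v_i\|_{L^{1+\delta}} \leq \|\sum_i \e_i v_i\|_{L^p}$, and the monotonicity of $\ell^q$ sequence norms in $q$ converts the resulting $\ell^p$ bound on $(\|v_i\|)_i$ into the desired $\ell^{1+\delta}$ bound. In the range $p \geq 2$, the analogous downgrade from $\Tsf_2(\msf{S}_p^d) \leq \sqrt{p-1}$ only yields $\Tsf_{1+\delta}(\msf{S}_p^d) \leq \sqrt{p-1}$, hence the suboptimal factor $(p-1)^{(1+\delta)/2}$; for $\delta < 1$ this is strictly worse than the claimed $(p-1)^\delta$. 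To obtain the sharp constant I will interpolate: viewing the Rademacher map $(v_i)_{i=1}^n \mapsto \sum_i \e_i v_i$ as an operator $\ell^q(\msf{S}_p^d) \to L^q(\{-1,+1\}^n;\msf{S}_p^d)$ of norm $\Tsf_q(\msf{S}_p^d)$, vector-valued Riesz-Thorin interpolation between the endpoints $q=1$ (with norm $\Tsf_1(\msf{S}_p^d) = 1$) and $q=2$ (with norm $\Tsf_2(\msf{S}_p^d) \leq \sqrt{p-1}$ by Tomczak-Jaegermann/Ball-Carlen-Lieb) at the parameter $\theta = \frac{2\delta}{1+\delta}$ yields
\begin{equation*}
\Tsf_{1+\delta}(\msf{S}_p^d) \leq 1^{1-\theta}(\sqrt{p-1})^{\theta} = (p-1)^{\delta/(1+\delta)},
\end{equation*}
and raising to the $(1+\delta)$-th power in \eqref{eq:lt} produces exactly the factor $2^{1+\delta}(p-1)^\delta$ of \eqref{p>2}.

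Finally, \eqref{opop} will follow from \eqref{p>2} by extrapolation in $p$. The bounds $\|A\|_{\mr{op}} \leq \|A\|_{\msf{S}_p}$ and $\|A\|_{\msf{S}_p} \leq d^{1/p}\|A\|_{\mr{op}}$ applied to the left- and right-hand sides of \eqref{p>2}, respectively, convert it into
\begin{equation*}
\mb{E}\Big\|\sum_{i=1}^n b_i W_i\Big\|_{\mr{op}}^{1+\delta} \leq 2^{1+\delta}(p-1)^\delta d^{(1+\delta)/p}\mb{E}\bigl[\|W_1\|_{\mr{op}}^{1+\delta}\bigr]\|b\|_{1+\delta}^{1+\delta},
\end{equation*}
and the choice $p = 2 \vee (1+\log(d+1))$ yields $(p-1)^\delta \leq \log^\delta(d+1)$ and $d^{(1+\delta)/p} \leq e^{1+\delta}$, producing the claimed $(2e)^{1+\delta}\log^\delta(d+1)$ constant (with a short ad hoc check when $1+\log(d+1) < 2$ and one must set $p=2$). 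The key technical point to justify is the interpolation step: it hinges on the identification $[L^{q_0}(\mu;V),L^{q_1}(\mu;V)]_\theta = L^{q_\theta}(\mu;V)$ for vector-valued $L^q$ spaces with values in $V = \msf{S}_p^d$, which is standard given that $\msf{S}_p^d$ is finite-dimensional (and is a UMD space for $p \in (1,\infty)$ in any case).
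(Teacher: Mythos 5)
Your proposal is correct and follows essentially the same route as the paper: symmetrisation via \eqref{eq:lt}, the Tomczak-Jaegermann/Ball--Carlen--Lieb type constants of $\msf{S}_p^d$, vector-valued interpolation between the $q=1$ and $q=2$ endpoints to get $\Tsf_{1+\delta}(\msf{S}_p^d)\leq (p-1)^{\delta/(1+\delta)}$, and the choice $p\approx\log(d+1)+1$ together with $\|A\|_{\mr{op}}\leq\|A\|_{\msf{S}_p}\leq d^{1/p}\|A\|_{\mr{op}}$ for \eqref{opop}. The only (harmless) deviations are that for \eqref{p<2} you replace the paper's interpolation between exponents $1$ and $p$ by a direct Jensen/monotonicity downgrade of $\Tsf_p(\msf{S}_p^d)=1$, and that you explicitly flag the $d=1$ edge case where $\log(d+1)+1<2$, which the paper glosses over.
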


\begin{proof}
We first prove \eqref{p>2}. In view of inequality \eqref{eq:lt}, it suffices to prove that the Rademacher type $(1+\delta)$-constant of $\msf{S}_p^d$ satisfies $\Tsf_{1+\delta}(\msf{S}_p^d) \leq (p-1)^{\frac{\delta}{1+\delta}}$. Given a  normed space $(X,\|\cdot\|_X)$ and $n\in\N$, consider the linear operator $T_n:\ell_p^n(X) \to L_p(\{-1,1\}^n;X)$ given by
\begin{equation}
\forall \ x=(x_1,\ldots,x_n)\in \ell_p^n(X), \qquad [T_nx](\e) = \sum_{i=1}^n \e_i x_i,
\end{equation}
where $\e=(\e_1,\ldots,\e_n)\in\{-1,1\}^n$. Then, it follows from \eqref{eq:def-typ} that
\begin{equation}
T_p(X) =  \sup_{n\in\N}\ \big\|T_n\big\|_{\ell_p^n(X)\to L_p(\{-1,1\}^n;X)}.
\end{equation}
In fact, if $X$ is finite-dimensional (like $\msf{S}_p^d$) then it was shown in \cite[Lemma~6.1]{FLM77} that the supremum is attained for some $n\leq \mr{dim}(X)(\mathrm{dim}(X)+1)/2$. Either way, by complex interpolation of vector-valued $L_p$ spaces (see \cite[Section~5.6]{BL76}), we thus deduce that
\begin{equation}
\Tsf_{1+\delta}\big( \msf{S}_p^d\big) \leq \msf{T}_1\big(\msf{S}_p^d\big)^\theta \msf{T}_2\big(\msf{S}_p^d\big)^{1-\theta},
\end{equation}
where $\frac{\theta}{1}+\frac{1-\theta}{2} = \frac{1}{1+\delta}$. The conclusion of \eqref{p>2} follows by plugging-in the value of $\theta$ and the result of \cite{Tom74,BCL94}. The proof of inequality \eqref{p<2} is similar, interpolating between 1 and $p$.

Finally, to deduce \eqref{opop}, note that for any $A\in\M_d(\R)$,
\begin{equation}
\|A\|_{\mr{op}} \leq \|A\|_{\msf{S}_p} \leq d^{1/p} \|A\|_{\mr{op}}
\end{equation}
and thus plugging $p=\log (d+1)+1$ in \eqref{p>2} we derive the desired inequality.
\end{proof}

Equipped with these inequalities, we can now proceed to the main part of the proof.

\begin{proof} [Proof of Theorem \ref{coropconvexity}]
Since $\E{S_n} = 0$ and $\mathrm{Cov}(S_n) = \E{{\bf YY}^T}$, we have
\begin{equation} \label{StandFImatrixfirst}
\op{\mathrm{Cov}(S_n)^{\frac{1}{2}}\mathcal{I}(S_n)\mathrm{Cov}(S_n)^{\frac{1}{2}}-\mathrm{I}_d} \leq \op{\E{{\bf YY}^T}}\op{\mathcal{I}(S_n)-\bigl(\E{{\bf YY}^T}\bigr)^{-1}},
\end{equation}
using that for any PSD matrices $A,B,$ $\op{AB} \leq \op{A}\op{B}$ and $\op{A^{\frac{1}{2}}} = \op{A}^{\frac{1}{2}}$. Now, $S_n$ is a Gaussian mixture itself and it satisfies
\begin{equation}
S_n = \sum_{i=1}^n a_i {\bf Y}_i Z_i \stackrel{\mr{(d)}}{=} \Big( \sum_{i=1}^n a_i^2 {\bf Y}_i {\bf Y}_i^T\Big)^{1/2} Z,
\end{equation}
Corollary \ref{cor:fisher-GM} yields the estimate
\begin{equation} \label{FImatrixupperbound}
\mathcal{I}(S_n)  \preceq \E{\Bigl(\sum_{i=1}^n{a_i^2{\bf Y}_i{\bf Y}_i^T}\Bigr)^{-1}}.
\end{equation}
Moreover by the multivariate Cram{\'e}r-Rao lower bound \cite[Theorem~3.4.4]{BD15}, we have 
\begin{equation} \label{mvcr}
\mathcal{I}(S_n) \succeq \bigl(\E{ {\bf YY}^T}\bigr)^{-1}
\end{equation}
and thus the matrix in the right-hand side of \eqref{StandFImatrixfirst} is positive semi-definite. Therefore, since $\|\cdot\|_{\mr{op}}$ is increasing with respect to the matrix ordering on positive matrices, \eqref{FImatrixupperbound} and \eqref{mvcr} yield
\begin{equation}
\op{\mathcal{I}(S_n)-\bigl(\E{{\bf YY}^T}\bigr)^{-1}} \leq \bigg\| \E{\Bigl(\sum_{i=1}^n{a_i^2{\bf Y}_i{\bf Y}_i^T}\Bigr)^{-1}} - \bigl(\E{ {\bf YY}^T}\bigr)^{-1} \bigg\|_{\mr{op}}.
\end{equation}

For $i=1,\ldots,n$ consider the i.i.d.~random matrices $W_i \eqdef {\bf Y}_i{\bf Y}_i^T - \E{{\bf YY}^T}$ and denote the event $\Ee \eqdef \bigl\{\op{\sum_{i=1}^n{a_i^2W_i}} \leq \e\bigr\}$. To bound the probability of the complement of $\Ee$, notice that
\begin{equation}
\begin{split}
\mathbb{P}{\{\Ee^{\mathrm{c}}\}} = \mb{P}\bigg\{\Big\|{\sum_{i=1}^n{a_i^2W_i}}\Big\|_{\mr{op}}^{1+\delta} > \e^{1+\delta}\bigg\} & \leq \frac{1}{\e^{1+\delta}}\E{\Big\|{\sum_{i=1}^n{a_i^2W_i}}\Big\|_{\mr{op}}^{1+\delta}} \\ & \stackrel{\eqref{opop}}{\leq} \Big(\frac{2e}{\e}\Big)^{1+\delta} \log^\delta (d+1) \mb{E}\big[\|W_1\|_{\mr{op}}^{1+\delta}\big]\ \big\|a\big\|_{2+2\delta}^{2+2\delta}.
\end{split}
\end{equation}
Moreover, since $\mb{E}\|W_1\|_{\mr{op}}^{1+\delta} \leq 2^{1+\delta} \mb{E} \|{\bf YY}^T\|_{\mr{op}}^{1+\delta}$,we get the bound 
\begin{equation} \label{eq:bp}
\mb{P}\{\Ee^\mr{c}\} \leq \Big(\frac{4e}{\e}\Big)^{1+\delta} \log^\delta (d+1) \mb{E}\big[\|{\bf YY^T}\|_{\mr{op}}^{1+\delta}\big]\ \big\|a\big\|_{2+2\delta}^{2+2\delta}.
\end{equation}
Next, we write 
\begin{equation} \label{twoterms}
\E{\Bigl(\sum_{i=1}^n{a_i^2{\bf Y}_i{\bf Y}_i^T}\Bigr)^{-1}} = \E\bigg[{\Bigl(\sum_{i=1}^n{a_i^2 {\bf Y}_i{\bf Y}_i^T}\Bigr)^{-1}\ \mb{I}_{\Ee}}\bigg] + \E\bigg[{\Bigl(\sum_{i=1}^n{a_i^2 {\bf Y}_i{\bf Y}_i^T}\Bigr)^{-1}\mb{I}_{\Ee^\mr{c}}}\bigg]
\end{equation}
and use the triangle inequality to get
\begin{equation} \label{eq:3terms}
\begin{split}
\bigg\| \E{\Bigl(\sum_{i=1}^n{a_i^2{\bf Y}_i{\bf Y}_i^T}\Bigr)^{-1}} - \bigl(\E{ {\bf YY}^T}\bigr)^{-1} \bigg\|_{\mr{op}} & \leq \Bigg\| \E\bigg[{\bigg(\Bigl(\sum_{i=1}^n{a_i^2{\bf Y}_i{\bf Y}_i^T}\Bigr)^{-1}} - \bigl(\E{ {\bf YY}^T}\bigr)^{-1}\bigg) \mb{I}_{\Ee} \bigg]\Bigg\|_{\mr{op}} 
\\ & + \mb{P}\{\Ee^\mr{c}\}\ \big\| \big( \mb{E} {\bf YY}^T\big)^{-1}\big\|_{\mr{op}}  +  \Bigg\| \E\bigg[{\Bigl(\sum_{i=1}^n{a_i^2{\bf Y}_i{\bf Y}_i^T}\Bigr)^{-1}} \mb{I}_{\Ee^\mr{c}}\bigg] \Bigg\|_{\mr{op}}.
\end{split}
\end{equation}

To control the first term in \eqref{eq:3terms}, we use Jensen's inequality for $\|\cdot\|_{\mr{op}}$ to get
\begin{equation}
\begin{split}
 \Bigg\| \E\bigg[{\bigg(\Bigl(\sum_{i=1}^n{a_i^2{\bf Y}_i{\bf Y}_i^T}\Bigr)^{-1}} & - \bigl(\E{ {\bf YY}^T}\bigr)^{-1}\bigg) \mb{I}_{\Ee} \bigg]\Bigg\|_{\mr{op}} \leq  \E\bigg[ \bigg\|{\Bigl(\sum_{i=1}^n{a_i^2{\bf Y}_i{\bf Y}_i^T}\Bigr)^{-1}} - \bigl(\E{ {\bf YY}^T}\bigr)^{-1}\bigg\|_{\mr{op}}\mb{I}_{\Ee} \bigg]
 \\ & \leq \big\| \big( \mb{E} {\bf YY}^T \big)^{-1}\big\|_{\mr{op}} \mb{E}\bigg[ \bigg\|{\Bigl(\sum_{i=1}^n{a_i^2{\bf Y}_i{\bf Y}_i^T}\Bigr)^{-1}} \bigg\|_{\mr{op}}\ \bigg\|\sum_{i=1}^n{a_i^2{\bf Y}_i{\bf Y}_i^T} - \E{ {\bf YY}^T}\bigg\|_{\mr{op}} \mb{I}_{\Ee} \bigg]
 \\ & =  \big\| \big( \mb{E} {\bf YY}^T \big)^{-1}\big\|_{\mr{op}} \mb{E}\bigg[ \bigg\|{\Bigl(\sum_{i=1}^n{a_i^2{\bf Y}_i{\bf Y}_i^T}\Bigr)^{-1}} \bigg\|_{\mr{op}}\ \bigg\|\sum_{i=1}^n{a_i^2 W_i}\bigg\|_{\mr{op}} \mb{I}_{\Ee} \bigg],
\end{split}
\end{equation}
where the second line follows from the identity $X^{-1}-Y^{-1} = X^{-1}(Y-X)Y^{-1}$ which yields the inequality $\|X^{-1}-Y^{-1}\|_{\mr{op}} \leq \|X^{-1}\|_{\mr{op}} \|Y^{-1}\|_{\mr{op}} \|X-Y\|_{\mr{op}}$ for positive matrices $X,Y$. Now, by the definition of the event $\Ee$ the last factor is at most $\e$ and thus we derive the bound
\begin{equation} \label{eq:almost1}
 \Bigg\| \E\bigg[{\bigg(\Bigl(\sum_{i=1}^n{a_i^2{\bf Y}_i{\bf Y}_i^T}\Bigr)^{-1}} - \bigl(\E{ {\bf YY}^T}\bigr)^{-1}\bigg) \mb{I}_{\Ee} \bigg]\Bigg\|_{\mr{op}} \leq \big\| \big( \mb{E} {\bf YY}^T \big)^{-1}\big\|_{\mr{op}} \mb{E}\bigg[ \bigg\|{\Bigl(\sum_{i=1}^n{a_i^2{\bf Y}_i{\bf Y}_i^T}\Bigr)^{-1}} \bigg\|_{\mr{op}} \bigg] \ \e.
\end{equation}
Finally, the function $A\mapsto A^{-1}$ is operator convex on positive matrices (see \cite[p.~117]{Bha97}), thus
\begin{equation} \label{eq:inv}
\Bigl(\sum_{i=1}^n{a_i^2{\bf Y}_i{\bf Y}_i^T}\Bigr)^{-1} \preceq \sum_{i=1}^n a_i^2 \big( {\bf Y}_i {\bf Y}_i^T \big)^{-1} \qquad \mbox{and} \qquad \big( \mb{E} {\bf YY}^T \big)^{-1} \preceq  \mb{E} \big({\bf YY}^T \big)^{-1}.
\end{equation}
Applying the operator norm on both sides, plugging this in \eqref{eq:almost1} and using the triangle inequality after taking the expectation, we conclude that
\begin{equation} \label{eq:term1}
\begin{split}
 \Bigg\| \E\bigg[{\bigg(\Bigl(\sum_{i=1}^n{a_i^2{\bf Y}_i{\bf Y}_i^T}\Bigr)^{-1}} - \bigl(\E{ {\bf YY}^T}\bigr)^{-1}\bigg) \mb{I}_{\Ee} \bigg]\Bigg\|_{\mr{op}} \leq \big(\mb{E}\big\| \big( {\bf YY}^T \big)^{-1}\big\|_{\mr{op}}\big)^2 \ \e.
\end{split}
\end{equation}
In view of \eqref{eq:bp} and \eqref{eq:inv}, the second term in \eqref{eq:3terms} is bounded by
\begin{equation} \label{eq:term2}
 \mb{P}\{\Ee^\mr{c}\}\ \big\| \big( \mb{E} {\bf YY}^T\big)^{-1}\big\|_{\mr{op}} \leq \Big(\frac{4e}{\e}\Big)^{1+\delta} \log^\delta (d+1) \mb{E}\big\|{\bf YY^T}\big\|_{\mr{op}}^{1+\delta}\ \mb{E}\big\| \big( {\bf YY}^T\big)^{-1}\big\|_{\mr{op}}  \ \big\|a\big\|_{2+2\delta}^{2+2\delta}.
\end{equation}
To bound the third term in \eqref{eq:3terms}, we use Jensen's inequality and \eqref{eq:inv} to get
\begin{equation} \label{for3term}
\begin{split}
 \Bigg\| \E\bigg[{\Bigl(\sum_{i=1}^n{a_i^2{\bf Y}_i{\bf Y}_i^T}\Bigr)^{-1}} \mb{I}_{\Ee^\mr{c}}\bigg] \Bigg\|_{\mr{op}} & \leq  \E\bigg[ \bigg\|{\Bigl(\sum_{i=1}^n{a_i^2{\bf Y}_i{\bf Y}_i^T}\Bigr)^{-1}}  \bigg\|_{\mr{op}}\mb{I}_{\Ee^\mr{c}}\bigg] 
 \\ & \stackrel{\eqref{eq:inv}}{\leq} \E\bigg[ \bigg\|{\sum_{i=1}^n{a_i^2 \big({\bf Y}_i{\bf Y}_i^T\big)^{-1}}}  \bigg\|_{\mr{op}}\mb{I}_{\Ee^\mr{c}}\bigg]
   \leq\E\bigg[ \Big({\sum_{i=1}^n{a_i^2 \big\|\big({\bf Y}_i{\bf Y}_i^T\big)^{-1}\big\|_{\mr{op}}}}\Big)  \mb{I}_{\Ee^\mr{c}}\bigg]
 \end{split}
\end{equation}
where the last estimate follows from the triangle inequality. Now, by H{\"o}lder's inequality,
\begin{equation}
\begin{split}
\E\bigg[ \Big({\sum_{i=1}^n{a_i^2 \big\|\big({\bf Y}_i{\bf Y}_i^T\big)^{-1}\big\|_{\mr{op}}}}\Big)  \mb{I}_{\Ee^\mr{c}}\bigg] & \leq  \E\bigg[ \Big( \sum_{i=1}^n a_i^2 \big\| \big( {\bf Y}_i{\bf Y}_i^T\big)^{-1} \big\|_{\mr{op}} \Big)^{1+\delta} \bigg]^{\frac{1}{1+\delta}} \mb{P}\{\Ee^\mr{c}\}^{\frac{\delta}{1+\delta}}
\\ & \leq \big(\mb{E} \big\| \big( {\bf YY}^T\big)^{-1}\big\|_{\mr{op}}^{1+\delta} \big)^{\frac{1}{1+\delta}} \mb{P}\{\Ee^\mr{c}\}^{\frac{\delta}{1+\delta}},
\end{split}
\end{equation}
where the last line follows from the triangle inequality in $L_{1+\delta}$. Combining this with \eqref{for3term} and \eqref{eq:bp} we thus conclude that
\begin{equation*} \label{eq:term3}
\Bigg\| \E\bigg[{\Bigl(\sum_{i=1}^n{a_i^2{\bf Y}_i{\bf Y}_i^T}\Bigr)^{-1}} \mb{I}_{\Ee^\mr{c}}\bigg] \Bigg\|_{\mr{op}} \leq  \Big(\frac{4e}{\e}\Big)^{\delta} \log^{\frac{\delta^2}{1+\delta}} (d+1) \big( \mb{E}\big\|{\bf YY^T}\big\|_{\mr{op}}^{1+\delta}\big)^{\frac{\delta}{1+\delta}}  \big(\mb{E} \big\| \big( {\bf YY}^T\big)^{-1}\big\|_{\mr{op}}^{1+\delta} \big)^{\frac{1}{1+\delta}}  \ \big\|a\big\|_{2+2\delta}^{2\delta}.
\end{equation*}
Plugging this bound along with \eqref{eq:term1} and \eqref{eq:term2} in \eqref{eq:3terms}, we get that for every $\e>0$,
\begin{equation}
\bigg\| \E{\Bigl(\sum_{i=1}^n{a_i^2{\bf Y}_i{\bf Y}_i^T}\Bigr)^{-1}} - \bigl(\E{ {\bf YY}^T}\bigr)^{-1} \bigg\|_{\mr{op}}  \lesssim_{{\bf Y}} \e+\frac{\log^\delta(d+1) \|a\|_{2+2\delta}^{2+2\delta}}{\e^{1+\delta}} + \frac{\log^{\frac{\delta^2}{1+\delta}}{(d+1)}\|a\|_{2+2\delta}^{2\delta}}{\e^\delta}
\end{equation}
where the implicit constant depends only on the moments of $\|{\bf YY}^T\|_{\mr{op}}$. Finally, the (almost) optimal choice $\e = \|a\|_{2+2\delta}^{\frac{2\delta}{1+\delta}}$ yields the desired bound.
\end{proof}

\begin{remark} \label{rem:norms}
We insisted on stating Theorem \ref{coropconvexity} as a bound for the operator norm of the (normalised) Fisher information matrix of $S_n$ but this is not necessary. An inspection of the proof reveals that given any norm $\|\cdot\|$ on $\M_d(\R)$ which is operator monotone, i.e.
\begin{equation}
0 \preceq A \preceq B \qquad \Longrightarrow \qquad \|A\| \leq \|B\|
\end{equation}
and satisfies the ideal property
\begin{equation}
\forall \ A,B\in \M_d(\R),\qquad \|AB\| \leq \|A\|_{\mr{op}} \|B\|,
\end{equation}
we can derive a bound of the form
\begin{equation} \label{eq:coropgen}
\big\|\mathrm{Cov}(S_n)^{\frac{1}{2}}\mathcal{I}(S_n)\mathrm{Cov}(S_n)^{\frac{1}{2}}-\mathrm{I}_d\big\| \leq C\big({\bf Y},\|\cdot\|\big)\ \|a\|_{2+2\delta}^{\frac{2\delta}{1+\delta}}
\end{equation}
for random matrices ${\bf Y}$ satisfying \eqref{matrixconditions}. The implicit constant depends on moments of $\|{\bf YY}^T\|$ and $\|{\bf YY}^T\|_{\mr{op}}$ and on the Rademacher type $(1+\delta)$-constant of $\|\cdot\|$. These conditions are, in particular, satisfied for all $\msf{S}_p^d$ norms and the corresponding type constant is subpolynomial in $d$ for $p\geq 1+\delta$.
\end{remark}

\begin{remark}
As was already mentioned in the introduction, bounding the relative Fisher information of a random vector automatically implies bounds for the relative entropy in view of the Gaussian logarithmic Sobolev inequality \cite{gross}. However, bounds for the Fisher information \emph{matrix} allow one to get better bounds for the relative entropy using more sophisticated functional inequalities which capture the whole spectrum of $\mc{I}(X)$. We refer to \cite{ES23} for more on this kind of inequalities.
\end{remark}

Finally, we present some examples of Gaussian mixtures related to conditions \eqref{matrixconditions}.

\medskip 

\noindent {\bf Examples.} {\bf 1.} Fix $p\in(0,2)$ and consider the random variable $X_p$ with density $c_pe^{-|x|^p}$, where $x\in\R$. It was shown in \cite[Lemma~23]{eskeGM} that $X$ can be expressed as 
\begin{equation}
X_p \stackrel{\mr{(d)}}{=} (2V_{\frac{p}{2}})^{-\frac{1}{2}}Z,
\end{equation}
where $V_{\frac{p}{2}}$ has density proportional to $t^{-\frac{1}{2}}g_{\frac{p}{2}}(t)$ and $g_a$ is the density of the standard positive $a$-stable law. The moments of $Y_p = V_{\frac{p}{2}}^{-1/2}$ then satisfy
\begin{equation}
\forall \alpha\in\R,\qquad \mb{E} Y_p^\alpha = \mb{E} V_{\frac{p}{2}}^{-\alpha/2} = \kappa_p \int_0^\infty t^{-\frac{\alpha+1}{2}} g_{\frac{p}{2}}(t) \,\diff t,
\end{equation}
for some $\kappa_p >0$. Since positive $\frac{p}{2}$-stable random variables have finite $\beta$-moments for all powers $\beta\in\big(-\infty,\frac{p}{2}\big)$, the assumptions \eqref{matrixconditions} are satisfied when 
\begin{equation}
\min\{2\delta+2,-2\delta-2\}>- p-1
\end{equation}
or, equivalently, $\delta<\frac{p-1}{2}$. Therefore, Theorem \ref{coropconvexity} applies for these variables when $p\in(1,2)$.

\smallskip

\noindent {\bf 2.} It is well-known (see, for instance \cite[Lemma~23]{eskeGM}) that for $p\in(0,2)$, the standard symmetric $p$-stable random variable $X_p$  can be written as
\begin{equation}
X_p \stackrel{\mr{(d)}}{=} (2G_{\frac{p}{2}})^{\frac{1}{2}}Z,
\end{equation}
where $G_{p/2}$ is a standard positive $\frac{p}{2}$ stable random variable. In this setting, the factor $G_{\frac{p}{2}}^{\frac{1}{2}}$ does not have a finite $2+2\delta$ moment for any value of $p$, so Theorem \ref{coropconvexity} does not apply.

\bibliographystyle{plain}
\bibliography{Info-GM.bib}

\end{document}